\newtheorem{theorem}{Theorem}
\newtheorem{lemma}[theorem]{Lemma}
\newtheorem{proposition}[theorem]{Proposition}
\newtheorem{corollary}[theorem]{Corollary}
\theoremstyle{definition}
\newtheorem{evidence}[theorem]{Numerical evidence}
\theoremstyle{remark}
\newcommand{\bra}[1]{\left\langle{#1}\right\vert}
\newcommand{\ket}[1]{\left\vert{#1}\right\rangle}
\newcommand{\udots}{\mathinner{\mskip1mu\raise1pt\vbox{\kern7pt\hbox{.}}
   \mskip2mu\raise4pt\hbox{.}\mskip2mu\raise7pt\hbox{.}\mskip1mu}}
\tikzset{bend angle=45}
\tikzset{baseline=(current bounding box.center)}
\tikzset{column sep=2ex}
\tikzset{row sep=2ex}
\tikzstyle{none}=[inner sep=1pt]
\tikzstyle{None}=[circle, fill=white, inner sep=0pt]
\tikzstyle{box}=[draw=black, fill=white, inner sep=.3ex, rounded corners=.1ex]
\tikzstyle{dot}=[circle, draw=black, fill=black!50, inner sep=1.25pt]
\tikzstyle{cross}=[preaction={draw=white, -, line width=3pt}]
\tikzstyle{arrow}=[postaction=decorate]
\newcommand{\markat}{0.5}
\newcommand{\markwithsym}{>}
\newcommand{\markwith}{{\arrow[black]{\markwithsym}}}
\tikzset{decoration={markings, mark=at position \markat with \markwith}}
\newcommand{\after}{\circ}
\newcommand{\cat}[1]{\ensuremath{\mathbf{#1}}}
\newcommand{\Cat}[1]{\ensuremath{\mathbf{#1}}}
\newcommand{\id}[1][]{\ensuremath{\mathrm{id}_{#1}}}
\newcommand{\field}[1]{\ensuremath{\mathbb{#1}}}
\newcommand{\tensor}{\ensuremath{\otimes}}
\newcommand{\ie}{\textit{i.e.}~}
\newcommand{\eg}{\textit{e.g.}~}
\newcommand{\CPM}{\ensuremath{\mathrm{CPM}}}
\newcommand{\cp}[1]{\ensuremath{\bm{#1}}}
\newcommand{\inprod}[2]{\ensuremath{\langle #1\,|\,#2 \rangle}}
\newcommand{\Tr}{\ensuremath{\mathrm{Tr}}}
\newcommand{\rank}{\ensuremath{\mathrm{rank}}}
\newcommand{\opnorm}[1]{\ensuremath{\|#1\|_{\mathrm{op}}}}
\newcommand{\swapmor}{\ensuremath{\sigma}}
\title{Completely positive classical structures \\ and sequentializable quantum protocols}
\author{Chris Heunen\thanks{Supported by the Netherlands Organisation
    for Scientific Research (NWO).} \institute{University of Oxford} \email{heunen@cs.ox.ac.uk} 
  \and Sergio Boixo\thanks{Supported by FIS2008-01236 and Defense Advanced
Research Projects Agency award N66001-09-1-2101.} \institute{USC \& Harvard University} \email{sergio@boixo.com} }
\date{}
\begin{document}
\maketitle

\begin{abstract}
  We study classical structures in various categories of completely positive morphisms: on sets   
  and relations, on cobordisms, on a free dagger compact category, and on Hilbert
  spaces. As an application, we prove that quantum maps with commuting
  Kraus operators can be sequentialized. Hence such protocols are 
  precisely as robust under general dephasing noise when entangled as when sequential.
\end{abstract}

\section{Introduction}

There are two ways to model classical information in categorical
quantum mechanics. Originally, biproducts were used to capture
classical information external to the category at
hand~\cite{abramskycoecke:categoricalsemantics}. Later, so-called 
classical structures emerged as a way to model classical data internal
to the category itself~\cite{coeckepaquette:practisingphysicist,
coeckepavlovicvicary:bases, abramskyheunen:hstar}.

The setting of most interest to quantum information theory is that of
completely positive maps, which was abstracted in Selinger's
CPM-construction~\cite{selinger:completelypositive}. This
construction need not preserve biproducts. If desired, biproducts have
to be freely added again. In fact, the counterexample given
in~\cite{selinger:completelypositive} is the category of
finite-dimensional Hilbert spaces and completely positive maps, which
is of course the crucial model for quantum mechanics.

On the other hand, the CPM-construction does preserve classical
structures. However, it might introduce new classical
structures, that would therefore not model classical information. The current
paper addresses the (non)existence of such noncanonical classical
structures in categories of completely positive maps. We prove that
there are no noncanonical completely positive classical structures in
several categories: sets and relations, cobordisms, and the free
dagger compact category on one generator. 

For the main event, we then consider the category of finite-dimensional Hilbert spaces and
completely positive maps. We cannot close the question there entirely
yet. But we make enough progress to enable an application to 
quantum metrology~\cite{giovannetti_quantum_2006}. Many quantum metrology
protocols operate parallelly on a maximally entangled state. For some
protocols, such as phase estimation, frame synchronization,
and clock synchronization, there exists an equivalent
sequential version, in which entanglement is traded for repeated
operations on a simpler quantum state. We can extend the class of known
sequentializable protocols to those whose Kraus operators commute.

The setup of the paper is simple: Section~\ref{sec:preliminaries} introduces the necessary
ingredients, Section~\ref{sec:cpcs} considers completely positive classical
structures, and Section~\ref{sec:seq} outlines their application to
sequentializable quantum protocols. 
Much of this work has been done while both authors were at the
Institute for Quantum Information at the California Institute of
Technology. We thank Peter Selinger for pointing
out~\cite{mccurdyselinger:basicdaggercompactcategories}, and David
P{\'e}rez Garc{\'i}a, Robert K{\"o}nig, Peter Love and Spiros Michalakis for discussions.

\section{Preliminaries}\label{sec:preliminaries}

For basics on dagger compact categories we refer
to~\cite{abramskycoecke:categoricalsemantics,
coeckepaquette:practisingphysicist, selinger:graphicallanguages}.
A \emph{classical structure} in a dagger compact category
is a morphism $\delta \colon X \to
X \tensor X$ satisfying $(\id
\tensor \delta^\dag) \after (\delta \tensor \id) = \delta \after \delta^\dag = (\delta^\dag
\tensor \id) \after (\id \tensor \delta)$, $\delta^\dag \after \delta = \id$,  and $\swapmor
\after \delta = \delta$, where $\swapmor \colon X
\tensor X \to X \tensor X$ is the swap isomorphism~\cite{coeckepaquette:practisingphysicist, 
 coeckepavlovicvicary:bases, abramskyheunen:hstar}. We depict $\delta$ as
$\vcenter{\hbox{\begin{tikzpicture}[scale=0.15]
     \node [style=none] (2) at (-1, 1) {};
     \node [style=none] (3) at (1, 1) {};
     \node [style=dot] (4) at (0, 0) {};
     \node [style=none] (5) at (0, -1) {};
     \draw[looseness=1.25, bend left=315] (2.center) to (4);
     \draw (4) to (5.center);
     \draw[looseness=1.25, bend right=45] (4) to (3.center);
   \end{tikzpicture}}}$.
In the category $\Cat{fdHilb}$ of finite-dimensional Hilbert spaces,
classical structures correspond to orthonormal
bases~\cite{coeckepavlovicvicary:bases, abramskyheunen:hstar}: an
orthonormal basis $\{\ket{i}\}_i$ induces a classical structure by
$\ket{i} \mapsto \ket{ii}$, and the
basis is retrieved from a classical structure as the set of nonzero
\emph{copyables} $\{ x \in X \mid \delta(x)=x \tensor x \} \backslash
\{0\}$. 
This legitimizes using classical structures to model (the
copying of) classical data. 

The other main ingredient we work with is the \emph{CPM-construction},
that we now briefly recall, turning a dagger compact closed 
category $\cat{C}$ into another one by taking the completely positive
morphisms~\cite{selinger:completelypositive}. The latter is given by lifting the Stinespring characterization of completely positive maps to a definition:
\begin{itemize}
  \item The objects of $\CPM(\cat{C})$ are those of $\cat{C}$. To
    distinguish the category they live in, we will denote
    objects of $\cat{C}$ by $X$, and objects of $\CPM(\cat{C})$ by $\cp{X}$.
  \item The morphisms $\cp{X} \to \cp{Y}$ in $\CPM(\cat{C})$ are the
    morphisms of $\cat{C}$ of the form $(\id \tensor \varepsilon \tensor \id)
    \after (f_* \tensor f)$ for some $f \colon X \to Z \tensor
    Y$ in $\cat{C}$ called \emph{Kraus morphism}. In
    graphical~\cite{selinger:graphicallanguages} terms: 
    \begin{align*}\label{eq:cp_morphism}
      \cp{f} \;\;=\;\; \left(\vcenter{\hbox{\begin{tikzpicture}
	\begin{pgfonlayer}{nodelayer}
          \node [style=none] (0) at (-1, 1) {$Y^*$};
          \node [style=none] (1) at (1, 1) {$Y$};
          \node [style=none] (2) at (-1, 0) {};
          \node [style=none] (3) at (-0.5, 0) {};
          \node [style=none] (4) at (0.5, 0) {};
          \node [style=none] (5) at (1, 0) {};
          \node [style=none] (6) at (-1, -1) {$X^*$};
          \node [style=none] (7) at (1, -1) {$X$};
          \node [style=none] at (0, 0.8) {$Z$};
          \node [style=box] at (-.75,0) {$\;\;f_*\;\;$};
          \node [style=box] at (.75,0) {$\,\;\;f\;\;\,$};
	\end{pgfonlayer}
	\begin{pgfonlayer}{edgelayer}
          \draw[arrow] (0) to (2);
          \draw[arrow, looseness=1.75, bend right=90] (4) to (3);
          \draw[arrow] (2) to (6);
          \draw[arrow] (7) to (5);
          \draw[arrow] (5) to (1);
	\end{pgfonlayer}
    \end{tikzpicture}}}\right)
    \end{align*}
    Here, $\varepsilon \colon Y \otimes Y^* \to I$ and $f_* \colon Y^*
    \to X^*$ are the counit and conjugation of the compact structure.
    To distinguish their home category, we denote
    morphisms in $\cat{C}$ as $f$, and those in $\CPM(\cat{C})$ as $\cp{f}$.
  \item Composition and dagger are as in $\cat{C}$.
  \item The tensor product of objects is as in $\cat{C}$.
  \item The tensor product of morphisms is given by
    $
        (\cp{f} \cp{\tensor} \cp{g}) 
      = ((123)(4)) \after (f_* \tensor f \tensor g_* \tensor g) \after ((132)(4))
    $,
    where the group theoretic notation $((132)(4))$ denotes the
    canonical permutation built out of identities and swap
    morphisms using tensor product and composition. Graphically, this
    looks as follows.
    \[
      \left(
      \vcenter{\hbox{\begin{tikzpicture}
        \begin{pgfonlayer}{nodelayer}
          \node [style=none] (0) at (-1, 1) {$B^*$};
          \node [style=none] (1) at (1, 1) {$B$};
          \node [style=none] (2) at (-1, 0) {};
          \node [style=none] (3) at (-0.5, 0) {};
          \node [style=none] (4) at (0.5, 0) {};
          \node [style=none] (5) at (1, 0) {};
          \node [style=none] (6) at (-1, -1) {$A^*$};
          \node [style=none] (7) at (1, -1) {$A$};
          \node [style=none] at (0, 0.75) {};
          \node [style=box] at (-.75,0) {$\;\;f_*\;\;$};
          \node [style=box] at (.75,0) {$\,\;\;f\;\;\,$};
        \end{pgfonlayer}
        \begin{pgfonlayer}{edgelayer}
          \draw[arrow] (0) to (2);
          \draw[arrow, looseness=1.75, bend right=90] (4) to (3);
          \draw[arrow] (2) to (6);
          \draw[arrow] (7) to (5);
          \draw[arrow] (5) to (1);
        \end{pgfonlayer}
      \end{tikzpicture}}}
      \right) \tensor \left(
      \vcenter{\hbox{\begin{tikzpicture}
        \begin{pgfonlayer}{nodelayer}
          \node [style=none] (0) at (-1, 1) {$D^*$};
          \node [style=none] (1) at (1, 1) {$D$};
          \node [style=none] (2) at (-1, 0) {};
          \node [style=none] (3) at (-0.5, 0) {};
          \node [style=none] (4) at (0.5, 0) {};
          \node [style=none] (5) at (1, 0) {};
          \node [style=none] (6) at (-1, -1) {$C^*$};
          \node [style=none] (7) at (1, -1) {$C$};
          \node [style=none] at (0, 0.75) {};
          \node [style=box] at (-.75,0) {$\;\;g_*\vphantom{f_*}\;\;$};
          \node [style=box] at (.75,0) {$\,\;\;g\vphantom{f}\;\;\,$};
        \end{pgfonlayer}
        \begin{pgfonlayer}{edgelayer}
          \draw[arrow] (0) to (2);
          \draw[arrow, looseness=1.75, bend right=90] (4) to (3);
          \draw[arrow] (2) to (6);
          \draw[arrow] (7) to (5);
          \draw[arrow] (5) to (1);
        \end{pgfonlayer}
      \end{tikzpicture}}}
      \right)  \; = \; \left(
      \vcenter{\hbox{\begin{tikzpicture}[scale=0.66]
        \begin{pgfonlayer}{nodelayer}
          \node [style=None] (0) at (-2.5, 2) {$D^*$};
          \node [style=None] (1) at (-1, 2) {$B^*$};
          \node [style=None] (2) at (1, 2) {$B$};
          \node [style=None] (3) at (2.5, 2) {$D$};
          \node [style=none] (7) at (-2.5, -0) {};
          \node [style=none] (8) at (-1.75, -0) {};
          \node [style=none] (9) at (-1, -0) {};
          \node [style=none] (10) at (-0.5, -0) {};
          \node [style=none] (11) at (0.5, -0) {};
          \node [style=none] (12) at (1, -0) {};
          \node [style=none] (13) at (1.75, -0) {};
          \node [style=none] (14) at (2.5, -0) {};
          \node [style=None] (18) at (-2.5, -2) {$C^*$};
          \node [style=None] (19) at (-1, -2) {$A^*$};
          \node [style=None] (20) at (1, -2) {$A$};
          \node [style=None] (21) at (2.5, -2) {$C$};
          \node [style=box] at (-2.25,0) {$\;\;g_*\vphantom{f_*}\;\;$};
          \node [style=box] at (-.75,0) {$\,\;\;f_*\;\;\,$};
          \node [style=box] at (.75,0) {$\;\;f\;\;$};
          \node [style=box] at (2.25,0) {$\,\;\;g\vphantom{f}\;\;\,$};
        \end{pgfonlayer}
        \begin{pgfonlayer}{edgelayer}
          \draw[arrow] (21.center) to (14.center);
          \draw[arrow] (0.center) to (7.center);
          \draw[arrow, markat=.35] (1.center) to (9.center);
          \draw[arrow, markat=.7] (12.center) to (2.center);
          \draw[arrow,looseness=1.75, bend right=90] 
               (11.center) to (10.center);
          \draw[arrow] (7.center) to (18.center);
          \draw[arrow] (9.center) to (19.center);
          \draw[arrow] (20.center) to (12.center);
          \draw[arrow] (14.center) to (3.center);
          \draw[arrow, cross, looseness=1.25, bend right=90] (13) to (8);
         \end{pgfonlayer}
      \end{tikzpicture}}}
      \right)
    \]
  \item The swap isomorphism $\cp{\swapmor} \colon \cp{X \tensor Y}
    \to \cp{Y \tensor X}$ in $\CPM(\cat{C})$ is given by $\swapmor
    \tensor \swapmor$. 
    \[\begin{tikzpicture}[scale=0.66]
        \begin{pgfonlayer}{nodelayer}
          \node [style=none] (0) at (-1.5, 1) {$X^*$};
          \node [style=none] (1) at (-0.5, 1) {$Y^*$};
          \node [style=none] (2) at (0.5, 1) {$Y$};
          \node [style=none] (3) at (1.5, 1) {$X$};
          \node [style=none] (4) at (-1.5, -1) {$Y^*$};
          \node [style=none] (5) at (-0.5, -1) {$X^*$};
          \node [style=none] (6) at (0.5, -1) {$X$};
          \node [style=none] (7) at (1.5, -1) {$Y$};
        \end{pgfonlayer}
        \begin{pgfonlayer}{edgelayer}
          \draw[arrow, markat=.3,  markwith=<,looseness=1.25,
                out=270, in=90] (2) to (7); 
          \draw[arrow, markat=.3, looseness=1.25, out=270, in=90] (1) to (4);  
          \draw[arrow, markat=.75,
                cross, looseness=1.25, out=90, in=270] (6) to (3); 
          \draw[arrow, markat=.3, 
                cross, looseness=1.25, out=270, in=90] (0) to (5); 
        \end{pgfonlayer}
    \end{tikzpicture}\]
\end{itemize}

There are two functors of particular interest between $\cat{C}$ and
$\CPM(\cat{C})$. They form the components of a natural transformation
when varying the base category
$\cat{C}$~\cite{mccurdyselinger:basicdaggercompactcategories}. 

\begin{lemma}
\label{lem:Ffunctor}
  Consider the functors 
  \begin{align*}
    F & \colon \cat{C} \to \CPM(\cat{C})
    & F(X) & = \cp{X}
    & F(f) & = f_* \tensor f, \\
    G & \colon \CPM(\cat{C}) \to \cat{C}
    & G(\cp{X}) & = X^* \tensor X
    & G(\cp{f}) & = (\id \tensor \varepsilon \tensor \id) \after (f_* \tensor f).
 \end{align*}
  Then $F$ is a symmetric (strict) monoidal functor that preserves dagger, \\
  and $G$ is a symmetric (strong) monoidal functor that preserves dagger. 
\end{lemma}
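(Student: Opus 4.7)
I would verify the claimed structure on $F$ and $G$ in turn, handling each property (well-definedness, functoriality, monoidality, symmetry, dagger) in order.

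For $F$, first note that $f_* \tensor f$ is indeed a $\CPM(\cat{C})$-morphism: take the Kraus morphism to be $f$ itself, viewed as $X \to I \tensor Y$, so that the Kraus formula $(\id \tensor \varepsilon \tensor \id) \after (f_* \tensor f)$ collapses to $f_* \tensor f$ up to unitors, because the counit at the unit object is essentially the identity. Functoriality then reduces to functoriality of conjugation and bifunctoriality of $\tensor$:
\[
  F(g \after f) \;=\; (g \after f)_* \tensor (g \after f) \;=\; (g_* \tensor g) \after (f_* \tensor f) \;=\; F(g) \after F(f),
\]
and similarly $F(\id_X) = \id$. Strict monoidality on objects is automatic, since $\CPM(\cat{C})$ has the same objects and tensor as $\cat{C}$. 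On morphisms, I would compare $(f \tensor g)_* \tensor (f \tensor g)$ with the pictorial definition of $F(f) \cp{\tensor} F(g)$: taking trivial environments in both Kraus morphisms makes the two environment-bending strands appearing in the definition of $\cp{\tensor}$ run through the unit object $I$, so they collapse to identities, leaving exactly $F(f \tensor g)$. Dagger preservation follows from $(f^\dag)_* = (f_*)^\dag$ together with the fact that dagger in $\CPM(\cat{C})$ is inherited from $\cat{C}$, and symmetry preservation reduces to the observation that the conjugate of a swap is again a swap.

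For $G$, the clean point is that $G(\cp{f})$ is literally the morphism in $\cat{C}$ that represents $\cp{f}$, so $G$ is essentially the inclusion of $\CPM(\cat{C})$-morphisms back into $\cat{C}$, under the renaming $\cp{X} \mapsto X^* \tensor X$. Consequently functoriality and dagger preservation are automatic. Strong (but not strict) monoidality requires a coherence isomorphism
\[
  \phi_{X,Y} \colon (X^* \tensor X) \tensor (Y^* \tensor Y) \;\to\; (X \tensor Y)^* \tensor (X \tensor Y),
\]
which I would take to be the canonical symmetric-monoidal shuffle (swapping $X$ past $Y^*$ and identifying $(X \tensor Y)^*$ with $Y^* \tensor X^*$). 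The pentagon and hexagon coherences for $\phi$, together with symmetry preservation---which unpacks $\cp{\swapmor} = \swapmor \tensor \swapmor$---then reduce to routine permutation identities that follow from coherence of the symmetric monoidal structure of $\cat{C}$ itself. Strictness of the unit is similar and uses that $I^* \tensor I \cong I$.

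The hard part is not conceptual but bookkeeping: one has to thread $\phi$ through every diagram involving $G$, and to verify carefully that the environment-bending strands in the pictorial definitions of $\cp{\tensor}$ and $\cp{\swapmor}$ collapse as intended under $F$. No deep content is needed beyond the coherence theorem for symmetric monoidal dagger categories and functoriality of conjugation.
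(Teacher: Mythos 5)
Your plan matches the paper's proof: the paper likewise takes identities for $F$'s structure maps and the canonical permutation $(123)(4)$ (your ``shuffle'' $X^* \tensor X \tensor Y^* \tensor Y \to Y^* \tensor X^* \tensor X \tensor Y$) for $G$'s coherence isomorphism, reducing the required diagrams to the coherence theorem for compact closed categories and checking the unit and symmetry conditions explicitly. The only cosmetic slip is labelling the monoidal-functor coherence conditions ``pentagon and hexagon''; otherwise this is essentially the same argument.
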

\begin{proof}
  The first statement is easily verified by taking identities for the required
  morphisms $\cp{I} \to F(I)$ and $F(X) \tensor F(Y) \to F(X \tensor
  Y)$. As to the second statement, $G$ preserves daggers almost by definition.
  To verify that $G$ is (strong) monoidal, one can take the
  required morphisms $\varphi \colon I \to G(\cp{I}) = I^* \tensor I$
  and $\varphi_{X,Y} \colon G(\cp{X}) \tensor G(\cp{Y}) = X^* \tensor
  X \tensor Y^* \tensor Y \to Y^* \tensor X^* \tensor X \tensor Y
  \cong G(\cp{X \tensor Y})$ to be the canonical isomorphisms of that
  type. In particular, $\varphi_{X,Y} = (123)(4)$. 
  This makes the required diagrams commute. For example, the `left
  unit' condition boils down to $(\rho_X)_* = \lambda_{X^*} \after (i
  \otimes \id[X^*])$, where $i$ is the canonical isomorphism $I^* \to
  I$. This follows from unitarity of $\rho_X$, because
  $(\rho_X)_* = (\rho_X)^{\dag *} = (\rho_X^{-1})^* = (\rho_X^*)^{-1}$,
  and the diagram
  \[\xymatrix{
      (X \tensor I)^* \ar_-{\cong}[d] 
    & X^* \ar_-{\rho_X^*}[l] \\
      I^* \tensor X^* \ar_-{i \tensor \id}[r] 
    & I \tensor X^* \ar_-{\lambda_{X^*}}[u]
  }\]
  commutes by the coherence theorem for compact closed categories.
  Finally, 
  \[
    \varphi \after \swapmor
    = \vcenter{\hbox{\begin{tikzpicture}[scale=0.75]
	\begin{pgfonlayer}{nodelayer}
          \node [style=None] (0) at (-1.5, 1) {$X^*$};
          \node [style=None] (1) at (-0.5, 1) {$Y^*$};
          \node [style=None] (2) at (0.5, 1) {$Y$};
          \node [style=None] (3) at (1.5, 1) {$X$};
          \node [style=none] (4) at (-1.5, -0) {};
          \node [style=none] (5) at (-0.5, -0) {};
          \node [style=none] (6) at (0.5, -0) {};
          \node [style=none] (7) at (1.5, -0) {};
          \node [style=None] (8) at (-1.5, -1) {$X^*$};
          \node [style=None] (9) at (-0.5, -1) {$X$};
          \node [style=None] (10) at (0.5, -1) {$Y^*$};
          \node [style=None] (11) at (1.5, -1) {$Y$};
	\end{pgfonlayer}
	\begin{pgfonlayer}{edgelayer}
          \draw[looseness=1.50, out=-90, in=90, -<] (2.center) to (5.center);
          \draw[looseness=1.50, out=-90, in=90, ->] (1.center) to (4.center);
          \draw[looseness=0.75, out=270, in=90] (5.center) to (11.center);
          \draw[looseness=0.75, out=-90, in=90] (4.north) to (10.center);
          \draw[cross, -<] (3.center) to (7.center);
          \draw[cross, looseness=0.75, out=270, in=90] 
               (7.center) to (9.center);
 	  \draw[cross, looseness=0.75, out=270, in=90] 
               (6.north) to (8.center);
          \draw[cross, out=270, in=90, ->] (0.center) to (6.south);
	\end{pgfonlayer}
      \end{tikzpicture}}}
    = \vcenter{\hbox{\begin{tikzpicture}[scale=0.75]
	\begin{pgfonlayer}{nodelayer}
          \node [style=None] (0) at (-1.5, 1) {$X^*$};
          \node [style=None] (1) at (-0.5, 1) {$Y^*$};
          \node [style=None] (2) at (0.5, 1) {$Y$};
          \node [style=None] (3) at (1.5, 1) {$X$};
          \node [style=none] (4) at (-1.5, -0) {};
          \node [style=none] (5) at (-0.5, -0) {};
          \node [style=none] (6) at (0.5, -0) {};
          \node [style=none] (7) at (1.5, -0) {};
          \node [style=None] (8) at (-1.5, -1) {$X^*$};
          \node [style=None] (9) at (-0.5, -1) {$X$};
          \node [style=None] (10) at (0.5, -1) {$Y^*$};
          \node [style=None] (11) at (1.5, -1) {$Y$};
	\end{pgfonlayer}
	\begin{pgfonlayer}{edgelayer}
          \draw[out=-90, in=90] (4.north) to (10.center);
          \draw[out=270, in=90, -<] (2.center) to (7.center);
          \draw (7.center) to (11.center);
          \draw[in=90, out=270, ->] (1.center) to (4.south);
          \draw[cross, out=270, in=90] (6.center) to (9.center);
          \draw[cross, looseness=1.25, in=90, out=270, -<] 
               (3.center) to (6.center);
          \draw[cross, looseness=1.25, out=90, in=270, -<] 
               (8.center) to (5.center);
	  \draw[cross, looseness=1.25, out=90, in=270] 
               (5.center) to (0.center);
	\end{pgfonlayer}
      \end{tikzpicture}}}
    = \cp{\swapmor} \after \varphi,
 \]
  so $G$ is in fact symmetric strong monoidal.
\end{proof}

\section{Completely positive classical structures}\label{sec:cpcs}

This section considers classical structures in categories
$\CPM(\cat{C})$ for various dagger compact categories $\cat{C}$,
starting from the following corollary of Lemma~\ref{lem:Ffunctor}.

\begin{corollary}
\label{cor:trivialcs}
  Every classical structure on $X$ in $\cat{C}$ induces a classical
  structure on $\cp{X}$ in $\CPM(\cat{C})$.  \\
  Every classical structure on $\cp{X}$ in $\CPM(\cat{C})$ induces a
  classical structure on $X^* \tensor X$ in $\cat{C}$.   
  \qed
\end{corollary}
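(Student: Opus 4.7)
The plan is to invoke the standard fact that any symmetric monoidal dagger-preserving functor transports classical structures. The defining equations of a classical structure $\delta \colon X \to X \tensor X$ — the Frobenius law $(\id \tensor \delta^\dag) \after (\delta \tensor \id) = \delta \after \delta^\dag = (\delta^\dag \tensor \id) \after (\id \tensor \delta)$, specialness $\delta^\dag \after \delta = \id$, and commutativity $\swapmor \after \delta = \delta$ — are equations between morphisms built only out of $\delta$, $\delta^\dag$, $\id$, $\swapmor$, composition, and tensor. Such equations are automatically preserved by any symmetric monoidal dagger-preserving functor, at most up to conjugation by its structure isomorphisms in the strong (non-strict) case.

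For the first statement, I would apply Lemma~\ref{lem:Ffunctor} to $F$. Given a classical structure $\delta \colon X \to X \tensor X$ in $\cat{C}$, the morphism $F(\delta) \colon F(X) = \cp{X} \to F(X \tensor X) = \cp{X} \tensor \cp{X}$ is the candidate; strictness of $F$ ensures that no structure isomorphism needs to be inserted. Then $F(\delta^\dag) = F(\delta)^\dag$ by dagger-preservation, $F(\swapmor) = \cp{\swapmor}$ by symmetry, and each of the three classical-structure equations transports directly by applying $F$ to it.

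For the second statement, I would apply $G$ in the same way, the only difference being that $G$ is strong but not strict monoidal. Given a classical structure $\cp{\delta} \colon \cp{X} \to \cp{X} \tensor \cp{X}$ in $\CPM(\cat{C})$, I would take the induced morphism to be $\delta := \varphi_{X,X}^{-1} \after G(\cp{\delta}) \colon X^* \tensor X \to (X^* \tensor X) \tensor (X^* \tensor X)$, where $\varphi_{X,X} \colon G(\cp{X}) \tensor G(\cp{X}) \to G(\cp{X} \tensor \cp{X})$ is the monoidal structure isomorphism of $G$. The Frobenius, specialness, and commutativity equations for $\delta$ then follow by applying $G$ to the corresponding equations for $\cp{\delta}$ and using, respectively, monoidal naturality (for Frobenius), dagger-preservation (for specialness, after noting that $\varphi_{X,X}$ is unitary, which follows from its explicit form as a permutation), and compatibility of $\varphi$ with the symmetry (which was already verified in the last display of the proof of Lemma~\ref{lem:Ffunctor}).

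The main obstacle is not conceptual but bookkeeping: one must check that the coherence hexagon and the unitarity of $\varphi_{X,X}$ cooperate correctly to convert each axiom for $\cp{\delta}$ into the corresponding axiom for $\delta$. Since these coherence properties are exactly what is packaged into the statement that $G$ is a symmetric strong monoidal dagger functor, the verification is routine and the corollary follows.
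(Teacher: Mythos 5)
Your proposal is correct and is exactly the argument the paper intends: the corollary is stated with a \qed precisely because it follows by transporting the classical-structure axioms along the dagger-preserving symmetric monoidal functors $F$ (strict) and $G$ (strong, with unitary coherence isomorphism $\varphi_{X,X}$) of Lemma~\ref{lem:Ffunctor}. Your explicit handling of the strict/strong distinction and the unitarity of $\varphi_{X,X}$ fills in the routine details the paper leaves implicit.
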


Classical structures in $\CPM(\cat{C})$ induced by classical structures
in $\cat{C}$ are called \emph{canonical}.
We will prove that for $\cat{C}=\Cat{Rel}$, $\cat{C}=\Cat{Cob}$, and
for $\cat{C}$ the free dagger compact category on one generator,
any classical structure in $\CPM(\cat{C})$ is canonical. We conjecture
the same is the case for $\cat{C}=\Cat{fdHilb}$ and provide strong
evidence for this conjecture.

\paragraph{Sets and relations}

Let us start with the category $\Cat{Rel}$ of sets and relations as a
toy example. In general, a relation $R \subseteq (X \times X) \times
(Y \times Y)$ is completely positive when
\begin{align}
  (x',x)R(y',y) & \Longleftrightarrow (x,x')R(y,y'), \label{eq:cprel1} \\
  (x',x)R(y',y) & \Longrightarrow (x,x)R(y,y). \label{eq:cprel2}
\end{align}
By~\cite{pavlovic:frobeniusinrel}, a classical structure on an object $X$ in
$\CPM(\Cat{Rel})$ is a direct sum of Abelian groups, the union of whose
carrier sets is $X \times X$. Additionally, the group 
multiplication must be a completely positive relation in the above
sense. This means the following, writing the group additively:
\begin{align}
  (a,b) + (c,d) = (e,f) & \Longleftrightarrow (c,d) + (a,b) = (f,e), 
  \label{eq:cpab1} \\
  (a,b) + (c,d) = (e,f) & \Longrightarrow (c,d) + (c,d) = (f,f). 
  \label{eq:cpab2}
\end{align}
In particular, if we take $(a,b)=0$, then \eqref{eq:cpab1} implies
that $(c,d) = (c,d)+(a,b)=(d,c)$. Hence it is already forced that
$c=d$. That is, the classical structure on $X$ in $\CPM(\Cat{Rel})$
must be canonical. This proves the following theorem.

\begin{theorem}
  Any classical structure in $\CPM(\Cat{Rel})$ is canonical. 
  \qed
\end{theorem}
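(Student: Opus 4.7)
The plan is to combine two ingredients: Pavlovi\'c's classification of classical structures in $\Cat{Rel}$ (as disjoint unions of Abelian groups whose carriers jointly partition the underlying set, with copying given by group multiplication), and an explicit combinatorial description of which relations are completely positive. By Corollary~\ref{cor:trivialcs}, a classical structure on $\cp{X}$ in $\CPM(\Cat{Rel})$ induces a classical structure on $X \times X$ in $\Cat{Rel}$, so Pavlovi\'c's result gives an Abelian-group structure on each component of a partition of $X \times X$; the additional content to be extracted is that the multiplication relation must itself be completely positive in the sense of $\CPM(\Cat{Rel})$.

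First I would unpack the Stinespring form of a $\CPM(\Cat{Rel})$-morphism to characterize completely positive relations $R \subseteq (X \times X) \times (Y \times Y)$ concretely. Reading off the shape $(\id \tensor \varepsilon \tensor \id) \after (f_* \tensor f)$ and using that taking the conjugate $f_*$ in $\Cat{Rel}$ flips the two components of each pair, I expect to recover the two combinatorial constraints displayed in the excerpt: a symmetry condition $(x',x) R (y',y) \Leftrightarrow (x,x') R (y,y')$, and a ``diagonal'' closure condition $(x',x) R (y',y) \Rightarrow (x,x) R (y,y)$.

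Next I would apply these constraints to the addition relation of a single Abelian group summand, written additively on a carrier $G \subseteq X \times X$. The crucial move is to plug in the identity element: taking $(a,b)=0$ in the symmetry condition, the equality $(c,d)+0=(c,d)$ is exchanged for $0+(c,d)=(d,c)$, which by Abelian group axioms is again $(c,d)$. Hence $(c,d)=(d,c)$ for every element of the carrier, forcing $c=d$, so the carrier of each summand lies entirely on the diagonal of $X \times X$. A diagonal Abelian group structure on a subset of $\{(x,x) : x \in X\}$ is precisely the image under the functor $F$ from Lemma~\ref{lem:Ffunctor} of an Abelian group structure on the corresponding subset of $X$, which is to say, the canonical lift of a classical structure in $\Cat{Rel}$.

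The main obstacle I anticipate is confirming that Pavlovi\'c's characterization transports cleanly across the CPM construction: I need to check that a classical structure in $\CPM(\Cat{Rel})$ is genuinely detected as a disjoint union of Abelian groups on $X \times X$ by applying Corollary~\ref{cor:trivialcs} (rather than acquiring extra freedom from the CPM level), and conversely that my diagonal-group conclusion assembles the individual summands back into a canonical classical structure on $X$ in $\Cat{Rel}$. Once that bookkeeping is in place, the argument above closes the theorem in essentially one line.
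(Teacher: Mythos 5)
Your proposal is correct and follows essentially the same route as the paper: it invokes Pavlovi\'c's classification of classical structures in $\Cat{Rel}$ as disjoint unions of Abelian groups on $X \times X$, characterizes completely positive relations by the same two combinatorial conditions, and uses the same key step of setting $(a,b)=0$ in the symmetry condition to force every carrier element onto the diagonal. The paper's argument is exactly this, stated a bit more tersely.
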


\paragraph{Cobordisms}

Now consider the category $\Cat{Cob}$ of 1d compact closed manifolds and
2d cobordisms between them is a dagger compact category. It
is the free dagger compact category on a classical
structure~\cite{kock:frobenius}, and is very interesting from the
point of view of quantum field theory~\cite{baez:quandaries}. However,
it has the property that every isometry is automatically unitary: if
we are to have a homeomorphism 
\[
  \vcenter{\hbox{\includegraphics[height=10ex]{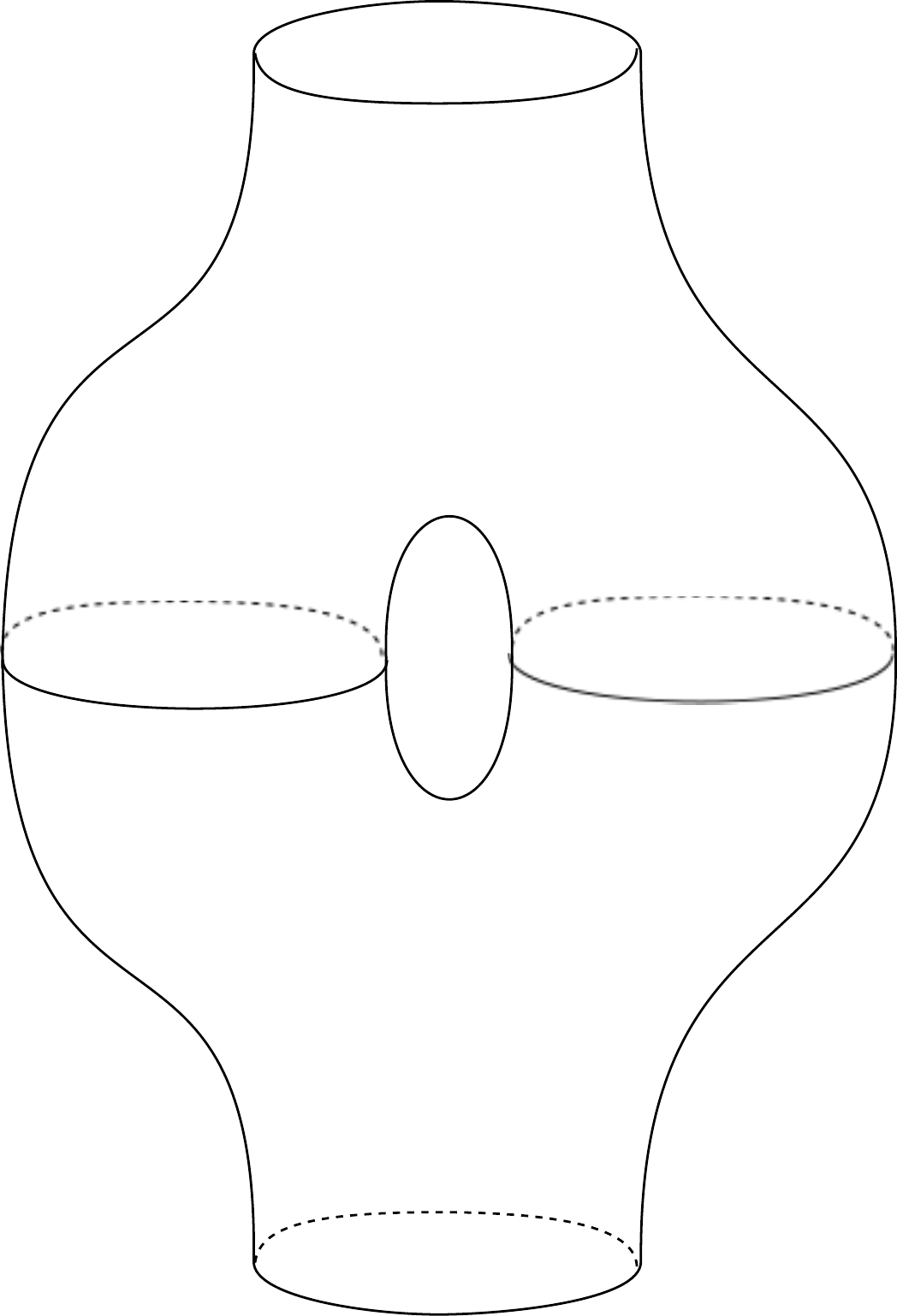}}} 
  \quad \stackrel{\cong}{\longrightarrow} 
  \vcenter{\hbox{\includegraphics[height=10ex]{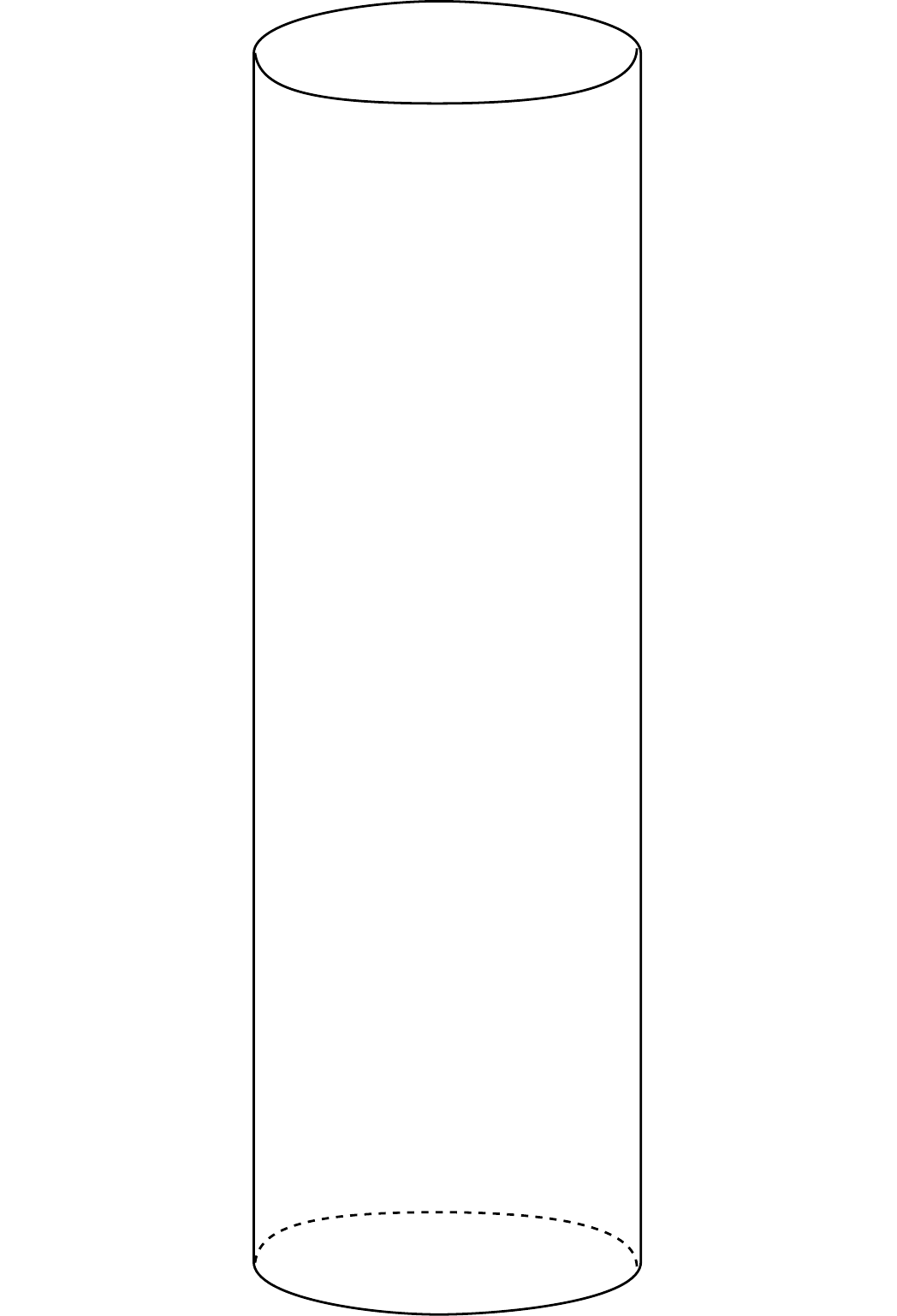}}}
\]
the left-hand cobordism cannot actually have nonzero
genus. Therefore $\Cat{Cob}$ has no nontrivial classical structures at
all, as does $\CPM(\Cat{Cob})$. This makes the following proposition
trivial, but nevertheless true.

\begin{proposition}
  Any classical structure in $\CPM(\Cat{Cob})$ is canonical.
  \qed
\end{proposition}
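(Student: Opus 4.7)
The plan is to exploit the topological rigidity of $\Cat{Cob}$ already flagged in the preceding paragraph, together with Corollary~\ref{cor:trivialcs}. Given a classical structure $\cp{\delta}$ on $\cp{X}$ in $\CPM(\Cat{Cob})$, the second clause of that corollary produces a classical structure on $X^* \tensor X$ in $\Cat{Cob}$ itself. Since $\Cat{Cob}$ is self-dual, $X^* \tensor X$ is just a disjoint union of circles, so the task reduces to classifying classical structures in $\Cat{Cob}$.

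Next I would argue that every classical structure in $\Cat{Cob}$ is a disjoint tensor product of standard pair-of-pants, one per component circle. The specialness condition $\delta^\dag \after \delta = \id$ forces the underlying cobordism to have genus zero: otherwise the composite $\delta^\dag \after \delta$ would inherit genus that cannot be cancelled off, and so could not be homeomorphic to a disjoint union of cylinders. Combined with the classifying property of $\Cat{Cob}$ as the free dagger compact category on one commutative Frobenius algebra~\cite{kock:frobenius}, this rigidly pins $\delta$ down (up to permutations of circles) to the canonical pair-of-pants. Consequently there are no noncanonical classical structures in $\Cat{Cob}$ at all.

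The final step is to transfer this rigidity back up: tracing through the definition of $F$ and of canonicality, the classical structure on $X^* \tensor X$ induced by $\cp{\delta}$ must arise as $F$ applied to a genuine classical structure $\delta$ on $X$ in $\Cat{Cob}$, and then $\cp{\delta} = F(\delta)$ by definition, so $\cp{\delta}$ is canonical. The only serious obstacle is the cobordism classification in the middle step, but this is essentially folklore in the $2$-dimensional TQFT community, so I would appeal to that body of work rather than grind through a hands-on homeomorphism argument.
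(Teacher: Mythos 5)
There is a genuine error in your middle step. The pair-of-pants comultiplication on a circle is a commutative dagger Frobenius algebra --- that is the sense in which $\Cat{Cob}$ is free on one such --- but it is \emph{not} a classical structure in the sense used in this paper, because it fails the specialness axiom $\delta^\dag \after \delta = \id$: gluing a pair of pants to its reverse along two boundary circles yields a genus-one surface with two boundary circles (each pair of pants has Euler characteristic $-1$, so the composite has $\chi=-2$), which is not homeomorphic to a cylinder. Your own genus observation is exactly the right tool, but you draw the wrong conclusion from it. Pursued correctly, it shows that an isometry in $\Cat{Cob}$ is automatically unitary, so any $\delta \colon X \to X \tensor X$ with $\delta^\dag \after \delta = \id$ forces $X \cong X \tensor X$, \ie $n = 2n$ circles, \ie $X = I$. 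Hence $\Cat{Cob}$ has \emph{no} nontrivial classical structures at all --- there is no ``canonical pair-of-pants classical structure'' for $\delta$ to be pinned down to --- and the same genus argument applied to the underlying cobordisms shows the same for $\CPM(\Cat{Cob})$. The proposition then holds trivially: the only classical structure lives on $\cp{I}$ and is canonical. This is precisely the paper's (one-line) argument.

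A secondary issue: your final ``transfer back'' step is a non sequitur as stated. Corollary~\ref{cor:trivialcs} tells you that $G(\cp{\delta})$ is a classical structure on $X^* \tensor X$ in $\Cat{Cob}$, but classifying classical structures in $\Cat{Cob}$ does not by itself show that $\cp{\delta}$ lies in the image of $F$; that inference is exactly the hard open part of the problem in the $\Cat{fdHilb}$ case. Here the point is moot only because everything collapses to $X = I$.
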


\paragraph{Free dagger compact category}

Let us now have a look at the free dagger compact category $\cat{C}$ on one
generator. It is described explicitly
in~\cite{abramsky:freecompactcategory}, whose notation we adopt here.
Concretely, the objects of $\cat{C}$ are $(m,n) \in
\mathbb{N}^2$. Morphisms $(m,n) \to (p,q)$ exist only when $m+q=n+p$,
and are pairs $(s,\pi)$ of a natural number $s$ and a permutation $\pi
\in S(m+q)$. The identity is $(0,\id)$, and composition is given by the
so-called execution formula as $(t,\sigma) \after (s, \pi) = (s+t,
\mbox{ex}(\sigma,\pi))$, see~\cite{abramsky:freecompactcategory}. The
monoidal structure is $(m,n) \tensor (p,q) = (m+p,n+q)$ on objects and
$(s,\pi) \tensor (t,\sigma)=(s+t,\pi + \sigma)$ on morphisms. Finally,
the dagger is given by $(s,\pi)^\dag = (s,\pi^{-1})$. It turns out
that the situation is similar to that with cobordisms.

\begin{proposition}
  Let $\cat{C}$ be the free dagger compact category on one generator.
  Any classical structure in $\CPM(\cat{C})$ is canonical.
\end{proposition}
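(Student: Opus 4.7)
The plan is to mirror the cobordism case: classify classical structures in $\cat{C}$ itself using Abramsky's combinatorial description, and then promote that rigidity to $\CPM(\cat{C})$ via Corollary~\ref{cor:trivialcs}.

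By Corollary~\ref{cor:trivialcs}, any classical structure on $\cp{(m,n)}$ in $\CPM(\cat{C})$ induces a classical structure on $(m,n)^* \tensor (m,n) = (m+n, m+n)$ in $\cat{C}$. So the first task is to classify classical structures on objects $(k,k)$ in $\cat{C}$. A candidate $\delta \colon (k,k) \to (2k,2k)$ is a pair $(s, \pi)$ with $s \in \mathbb{N}$ and $\pi \in S(3k)$, and $\delta^\dag = (s, \pi^{-1})$. Applying the isometry condition $\delta^\dag \after \delta = (0, \id)$ via the execution formula forces $s = 0$ and forbids any loops from being created when the feedback wires of $\pi$ and $\pi^{-1}$ are joined. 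Adding the Frobenius and commutativity ($\swapmor \after \delta = \delta$) conditions then pins $\pi$ down to the unique canonical permutation that duplicates each of the $k$ pairs $(A_i, A_i^*)$, so every classical structure on $(k,k)$ in $\cat{C}$ is canonical.

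To transfer this rigidity to $\CPM(\cat{C})$, observe that the classical structure $\cp{\delta}$ on $\cp{(m,n)}$ is of the Kraus form $(\id \tensor \varepsilon \tensor \id) \after (f_* \tensor f)$, which is literally its image in $\cat{C}$ under the functor $G$ of Lemma~\ref{lem:Ffunctor}. The previous step forces this expression to coincide with the canonical form $\delta'_* \tensor \delta'$ for some classical structure $\delta'$ on $(m,n)$ in $\cat{C}$, already requiring $m = n$ since $\cat{C}$ has no classical structures on $(m,n)$ with $m \neq n$ (the morphism $\delta'$ only exists when $m + 2n = n + 2m$). Unwinding the trace-like contraction in the Kraus expression then identifies $f$ (up to ancilla freedom) with such a $\delta'$, so $\cp{\delta} = F(\delta')$ is canonical.

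The principal difficulty is combinatorial bookkeeping: the topological intuition---no loops in an isometric trace, no crossings in a commutative Frobenius algebra---makes the answer transparent, but rigorously verifying it at the level of the execution formula and Abramsky's permutation calculus, and then unpacking the Kraus morphism back to its canonical origin, is the tedious step.
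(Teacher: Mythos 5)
Your step (2) rests on a false premise, and it is the heart of the proof. There is no ``canonical permutation that duplicates each of the $k$ pairs $(A_i,A_i^*)$'': morphisms of the free dagger compact category on one generator are loop-decorated \emph{bijections}, so nothing in this category can copy a wire. The only Frobenius-style candidate for a comultiplication on $(k,k) \cong (A \tensor A^*)^{\tensor k}$ is the pair-of-pants $\id \tensor \eta \tensor \id$, and it fails speciality: $\delta^\dag \after \delta = (\varepsilon \after \eta)^{\tensor k} \cdot \id$, and the scalar $\varepsilon \after \eta$ is the loop $(1,\id)$, not the identity $(0,\id)$. The paper's proof establishes exactly this in general: the single axiom $\delta^\dag \after \delta = \id$ already forces $X = I$, by exhibiting a nonzero off-diagonal entry $\pi_{11} \after \pi_{11}^{-1}$ in the matrix of $\mathrm{ex}(\pi^{-1},\pi)$ whenever $m = n > 0$, and then repeating the computation for $\cp{\delta}^\dag \after \cp{\delta}$ in $\CPM(\cat{C})$. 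So the situation is precisely the cobordism case you said you wanted to mirror --- the proposition holds because there are \emph{no} nontrivial classical structures at all, not because nontrivial ones get pinned down to a canonical form. Your appeal to the Frobenius and commutativity axioms to ``pin $\pi$ down'' is therefore chasing something that does not exist, and the conclusion you would reach (classical structures on every $\cp{(n,n)}$) is not the paper's.

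Your step (3) has a separate gap: from the fact that $G(\cp{\delta})$ is canonical in $\cat{C}$ you cannot directly conclude that $\cp{\delta} = F(\delta')$ --- $G$ is not shown to reflect canonicity, and ``unwinding the trace-like contraction'' is doing all the work without justification. The transfer is salvageable, though, once step (2) is corrected: if $\cat{C}$ has classical structures only on $I = (0,0)$, then Corollary~\ref{cor:trivialcs} forces $X^* \tensor X = (m+n,m+n) = (0,0)$, hence $X = I$ and $\cp{\delta}$ is trivially canonical. That would actually be a slightly slicker route than the paper's, which instead performs a second direct matrix computation inside $\CPM(\cat{C})$.
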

\begin{proof}
  We show that in both categories $\cat{C}$ and $\CPM(\cat{C})$,
  the only morphisms $\delta \colon X \to X \tensor X$ satisfying
  $\delta^\dag \after \delta = \id$ are the trivial ones with $X=I$.
  First of all, if $\delta=(s,\pi) \colon (m,n) \to (m,n) \tensor (m,n) = (2m,2n)$
  is a morphism in $\cat{C}$, then $m=n$, and hence $\pi \in
  S(3n)$. Now $\delta^\dag \after \delta = (s,\pi^{-1}) \after
  (s,\pi) = (2s, \mbox{ex}(\pi^{-1},\pi))$. Writing both $\pi$ and
  $\mbox{ex}(\pi^{-1},\pi)$ as matrices, as
  in~\cite{abramsky:freecompactcategory}, we find
  $\mbox{ex}(\pi^{-1},\pi)_{12} = \pi_{11} \circ \pi^{-1}_{11} = 1$. But
  $\id[12] = 0$. So $\delta^\dag \after \delta = \id$ only when
  $m=n=0$. That is, $\cat{C}$ only has trivial classical structures.

  Similarly, working out the upper-right entry of the matrix of
  $\cp{\delta}^\dag \after \cp{\delta}$ for a
  morphism $\cp{\delta}=\cp{(s,\pi)}$ in $\CPM(\cat{C})$, we find that it is the
  union of $\pi_{11} \circ \pi^{-1}_{11}$ with some other terms. In
  particular, it contains $\id$. Hence $\cp{\delta}^\dag \after
  \cp{\delta} = \textbf{id}$ only if $\mbox{dom}(\cp{\delta}) =
  \cp{I}$, so $\CPM(\cat{C})$ only has trivial classical structures, too.
\end{proof}

\paragraph{Finite-dimensional Hilbert spaces}

The rest of this section concentrates on the category $\CPM(\Cat{fdHilb})$.
Here, we can identify the object $\cp{X}$ with $\field{C}^n$, and $X^* \tensor X$ 
with the Hilbert space $M_n(\field{C})$ of $n$-by-$n$ matrices under
the Hilbert-Schmidt inner product $\inprod{\rho}{\sigma}=\Tr(\rho^\dag
\sigma)$. One strategy to investigate classical structures in this category, inspired by the strategy that worked so well for $\Cat{Rel}$, could be as follows: 
\begin{enumerate}
  \item Start with a classical structure $\cp{\delta}$ on $\field{C}^n$ in
    $\CPM(\Cat{fdHilb})$.
  \item Apply Corollary~\ref{cor:trivialcs} to obtain a classical
    structure $\delta = G(\cp{\delta})$ on $M_n(\field{C})$.
 \item Apply~\cite{coeckepavlovicvicary:bases} to obtain an orthonormal basis
    $\{\alpha\}$ for $M_n(\field{C})$.
  \item Investigate when $\delta \colon \ket{\alpha} \mapsto
    \ket{\alpha\alpha}$ is a completely positive map.
\end{enumerate}
Unfortunately, the last step proves to be quite difficult. A first
thought might be to use Choi's theorem, resulting in the following characterization.

\begin{theorem}
\label{thm:choi}
  Classical structures $\cp{\delta}$ on $\field{C}^n$ in
  $\CPM(\Cat{fdHilb})$ correspond to orthonormal bases $\alpha$ of
  $M_n(\field{C})$ for which 
  $
      \Delta_{j''k''j'k'jk} 
    = \sum_\alpha \overline{\alpha_{jk}} \alpha_{j'k'} \alpha_{j''k''}
  $
  are matrix entries of a positive semidefinite $\Delta \colon
  (\field{C}^n)^{\tensor 3} \to   (\field{C}^n)^{\tensor 3}$.
\end{theorem}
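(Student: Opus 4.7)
My plan is to execute the four-step strategy outlined just above the statement and identify the final step with Choi's theorem. Starting from a classical structure $\cp{\delta}$ on $\cp{\mathbb{C}^n}$ in $\CPM(\Cat{fdHilb})$, I apply the functor $G$ of Lemma~\ref{lem:Ffunctor}; by Corollary~\ref{cor:trivialcs}, $\delta = G(\cp{\delta})$ is a classical structure on $M_n(\field{C})$ in $\Cat{fdHilb}$ with its Hilbert-Schmidt inner product. The Coecke-Pavlovic-Vicary correspondence then encodes $\delta$ as an orthonormal basis $\{\alpha\}$ of $M_n(\field{C})$ whose copying map is $\ket{\alpha} \mapsto \ket{\alpha} \tensor \ket{\alpha}$. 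The theorem thus reduces to characterising which such bases arise this way, \ie when the linear map $\delta$ lifts along $G$ back to a morphism of $\CPM(\Cat{fdHilb})$.

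Under the identification $G(\cp{\mathbb{C}^n}) \cong M_n(\field{C})$, the functor $G$ is essentially the inclusion sending a CP map to its underlying linear map of matrix algebras, so a lift exists precisely when $\delta \colon M_n(\field{C}) \to M_n(\field{C}) \tensor M_n(\field{C})$ is completely positive. By Choi's theorem this is equivalent to positive semidefiniteness of the Choi matrix $J(\delta) = \sum_{jk} E_{jk} \tensor \delta(E_{jk})$ viewed as an operator on $(\field{C}^n)^{\tensor 3}$, where $E_{jk} = \ket{j}\bra{k}$. A short calculation expanding $E_{jk} = \sum_\alpha \overline{\alpha_{jk}}\,\alpha$ in the $\alpha$-basis gives $\delta(E_{jk}) = \sum_\alpha \overline{\alpha_{jk}}\,\alpha \tensor \alpha$, and reading off the matrix entries of $J(\delta)$ (up to a harmless reordering of tensor factors) recovers exactly the $\Delta_{j''k''j'k'jk}$ of the statement.

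Conversely, given an orthonormal basis $\{\alpha\}$ of $M_n(\field{C})$ with $\Delta$ positive semidefinite, Choi's theorem produces Kraus operators for $\delta$, which package into a CP morphism $\cp{\delta}$ with $G(\cp{\delta}) = \delta$. Because $G$ is a faithful functor preserving dagger and symmetric monoidal structure, the Frobenius, cocommutativity and isometry equations for $\delta$ --- supplied by the bases theorem on $M_n(\field{C})$ --- transfer back to the corresponding equations for $\cp{\delta}$ in $\CPM(\Cat{fdHilb})$. The step I expect to require most care is this final transfer: one must verify that $\cp{\delta}^\dag$ is automatically CP (which holds for maps between matrix algebras, since complete positivity is invariant under taking adjoints) so that every axiom involving both $\cp{\delta}$ and $\cp{\delta}^\dag$ descends faithfully through $G$. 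Modulo this bookkeeping, the theorem is essentially a direct translation of Choi's criterion into the language of classical structures.
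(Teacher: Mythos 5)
Your proposal is correct and follows essentially the same route as the paper: reduce via $G$ and the bases theorem to the single question of when the linear copy map $\alpha \mapsto \alpha \tensor \alpha$ on $M_n(\field{C})$ is completely positive, then apply Choi's theorem and compute the Choi matrix entries in the standard matrix-unit basis (your expansion $\delta(E_{jk}) = \sum_\alpha \overline{\alpha_{jk}}\,\alpha \tensor \alpha$ is exactly the paper's computation $\delta(\ket{j}\bra{k}) = \sum_\alpha \overline{\alpha_{jk}}\,(\alpha\tensor\alpha)$). Your extra care about the lift along $G$ and the transfer of the Frobenius axioms makes explicit what the paper leaves implicit, but introduces no new ideas and no gaps.
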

\begin{proof}
  Setting $\ket \psi = \sum_i \ket{ii}$ shows that  
  \[
    \delta(\rho) 
   = \sum_\alpha \bra{\psi} (\alpha^\dag \tensor \id) \after
        (\rho \tensor \id) \ket{\psi} (\alpha \tensor \alpha) \\
   = \sum_\alpha \mbox{Tr}(\alpha^\dag  \rho) \cdot (\alpha
        \tensor \alpha), 
  \]
  so that in particular $\delta(\ket{i}\bra{j}) = \sum_\alpha
  \overline{\alpha_{ij}} \cdot (\alpha \tensor \alpha)$. 
  Recall that Choi's theorem states that a morphism $\delta \colon X^*
  \tensor X \to Y^* \tensor Y$ in $\Cat{fdHilb}$ is completely
  positive if and only if the map $\Delta \colon X \tensor Y \to X \tensor Y$
  given by $\bra{aj}\Delta\ket{bk} =
  \bra{a}\delta(\ket{j}\bra{k})\ket{b}$ for $a,b \in Y$ and $j,k \in 
  X$ is positive semidefinite~\cite[3.14]{paulsen:completelypositive}. 
  Applying Choi's theorem to the map $\delta$ defined 
  by $\alpha \mapsto \alpha \tensor \alpha$ and taking $a =
  \ket{j''j'}$ and $b=\ket{k''k'}$ yields
  \[
        \Delta_{j''k''j'k'jk}
    = \bra{j''j'} \delta(\ket{j}\bra{k}) \ket{k''k'} 
    = \sum_\alpha \overline{\alpha_{jk}} \bra{j''j'} \alpha \tensor
        \alpha \ket{k''k'} 
    = \sum_\alpha \overline{\alpha_{jk}} \alpha_{j'k'}
    \alpha_{j''k''}.
  \qedhere
  \]
\end{proof}

The condition of the previous theorem is not vacuous. For
example, the normalized Pauli matrices
\[
  \alpha_1=\frac{1}{\sqrt{2}}\begin{pmatrix} 1 & 0 \\ 0 & 1 \end{pmatrix}, \;
  \alpha_2=\frac{1}{\sqrt{2}}\begin{pmatrix} 0 & 1 \\ 1 & 0 \end{pmatrix}, \;
  \alpha_3=\frac{1}{\sqrt{2}}\begin{pmatrix} 0 & -i \\ i & 0 \end{pmatrix}, \;
  \alpha_4=\frac{1}{\sqrt{2}}\begin{pmatrix} 1 & 0 \\ 0 & -1 \end{pmatrix}.
\]
form an orthonormal basis for $M_2(\field{C})$. But their Choi matrix
$\Delta$ is not even positive: 
  \[
    \sqrt{2} \alpha_1 + \alpha_2 + \alpha_4 =
    \begin{pmatrix}1+\frac{1}{\sqrt{2}} & \frac{1}{\sqrt{2}} \\
    \frac{1}{\sqrt{2}} & 1-\frac{1}{\sqrt{2}}\end{pmatrix} \geq 0,
  \]
  as its eigenvalues are $\{2,0\}$, but its image under $\delta$ is
  \[
  \sqrt{2} \alpha_1 \tensor \alpha_1 + \alpha_2 \tensor \alpha_2 +
  \alpha_4 \tensor \alpha_4 = \begin{pmatrix} \frac{1}{\sqrt{2}}+\frac{1}{2}
    & 0 & 0 & \frac{1}{2} \\ 0 & \frac{1}{\sqrt{2}}-\frac{1}{2} & \frac{1}{2} &
    0  \\ 0 & \frac{1}{2} & \frac{1}{\sqrt{2}} - \frac{1}{2} & 0 \\
    \frac{1}{2} & 0 & 0 & \frac{1}{\sqrt{2}} + \frac{1}{2}\end{pmatrix},
 \]
 which has an eigenvalue $\frac{1}{2}(\sqrt{2}-2)<0$ and therefore is
 not positive semidefinite.

\begin{evidence}
  The condition of Theorem~\ref{thm:choi} is
  easy to verify with computer algebra software. We can also
  generate random orthonormal bases when $X=\field{C}^n$, \ie when
  $X^* \tensor X$ is the C*-algebra $M_n(\field{C})$ of
  $n$-by-$n$-matrices: generate $n^2$ random matrices; with large
  probability they are linearly independent; use the Gram-Schmidt
  procedure to turn them into an orthonormal basis of $M_n$. Running
  this test for 1000 randomly chosen orthonormal bases did not
  provide a counterexample to our conjecture for $n=2,3$.
\end{evidence}

The previous Theorem and Numerical evidence neither provide much intuition
nor have direct physical significance. We end this section by providing some
more natural sufficient conditions, and by listing a large number of
necessary and sufficient conditions equivalent to our conjecture. This
partial progress is enough to enable novel physical
applications outlined in the next section.

\begin{proposition}
  \label{prop:sufficientconditions}
  Let $\{\alpha_1,\ldots,\alpha_N\}$ be an orthonormal basis of $X^* \tensor X$, and let
  $\delta$ be the induced classical structure in $\Cat{fdHilb}$. 
  If $\delta$ is completely positive, then:
  \begin{itemize}
  \item[(a)] the set of copyables is closed under adjoints: $\{\alpha_1,\ldots,\alpha_N\} =
    \{\alpha_1^\dag,\ldots,\alpha_N^\dag\};$
  \item[(b)] copyables are closed under absolute values:
    $\{|\alpha_1|,\ldots,|\alpha_N|\} \subseteq \{0,\alpha_1,\ldots,\alpha_N\};$
  \item[(c)] the map $\delta$ satisfies
    $\delta(\id)\delta(ab)=\delta(a)\delta(b)$ for all $a,b \colon X
    \to X$;
  \item[(d)] positive semidefinite copyables commute pairwise. 
\end{itemize}
\end{proposition}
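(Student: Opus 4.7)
The plan is to exploit the Kraus representation of $\delta$, viewed as a completely positive map $M_n \to M_n \tensor M_n$, and to use the isometry condition $\delta^\dag \after \delta = \id_{M_n}$ coming from the classical structure axioms to force a very rigid form on the Kraus operators. I will prove the parts in the order (a), (c), (b), (d).

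Part (a) is immediate from hermitian-preservation of CP maps: compute $\delta(\alpha^\dag) = \delta(\alpha)^\dag = (\alpha \tensor \alpha)^\dag = \alpha^\dag \tensor \alpha^\dag$, so $\alpha^\dag$ is a nonzero copyable and therefore belongs to the basis.

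For (c), I fix any Kraus form $\delta(a) = \sum_k V_k a V_k^*$ with $V_k \colon \mathbb{C}^n \to \mathbb{C}^n \tensor \mathbb{C}^n$. Unfolding $\delta^\dag \after \delta = \id$ yields $\sum_{k,l} V_l^* V_k \cdot a \cdot V_k^* V_l = a$ for all $a \in M_n$, which is a Kraus representation of the identity CP map on $M_n$. Since $\id_{M_n}$ has Choi rank one, each of its Kraus operators must be a scalar multiple of the identity, so $V_l^* V_k = c_{kl}\id$ for a positive semidefinite Hermitian matrix $(c_{kl})$ with $\Tr(c^2) = 1$. Diagonalizing $c$ and taking the corresponding unitary mixing of the $V_k$ produces a new Kraus form
\[
  \delta(a) \;=\; \sum_m \lambda_m\, W_m a W_m^*
\]
with $\lambda_m \geq 0$, $\sum_m \lambda_m^2 = 1$, and $W_m \colon \mathbb{C}^n \to \mathbb{C}^n \tensor \mathbb{C}^n$ mutually orthogonal isometries ($W_m^* W_l$ equals $\id$ for $m=l$ and $0$ otherwise). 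The orthogonality annihilates all cross terms, and direct computation then yields $\delta(\id)\,\delta(ab) = \sum_m \lambda_m^2\, W_m a b W_m^* = \delta(a)\,\delta(b)$.

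For (b), I combine (c) with $a = \alpha^\dag,\ b = \alpha$ (using (a) so that $\delta(\alpha^\dag) = \alpha^\dag \tensor \alpha^\dag$) and with $a = b = |\alpha|$ to get
\[
 \delta(|\alpha|)^2 \;=\; \delta(\id)\,\delta(|\alpha|^2) \;=\; \delta(\alpha^\dag)\,\delta(\alpha) \;=\; |\alpha|^2 \tensor |\alpha|^2 \;=\; (|\alpha| \tensor |\alpha|)^2.
\]
Both $\delta(|\alpha|)$ (the CP image of a positive element) and $|\alpha| \tensor |\alpha|$ (tensor of positives) are PSD, so uniqueness of the PSD square root forces $\delta(|\alpha|) = |\alpha| \tensor |\alpha|$, making the nonzero $|\alpha|$ a copyable and so an element of $\{\alpha_1,\ldots,\alpha_N\}$. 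Part (d) follows without further appeal to CP: two distinct PSD copyables are Hilbert--Schmidt orthogonal, so $0 = \Tr(\alpha\beta) = \Tr(\beta^{1/2}\alpha\beta^{1/2})$ has a positive integrand, which therefore vanishes, giving $\alpha\beta = 0 = \beta\alpha$. The main obstacle is the Kraus-operator structure analysis used in (c); once the orthogonal-isometry form of $\delta$ has been extracted, everything else is direct algebra plus the uniqueness of PSD square roots.
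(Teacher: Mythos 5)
Your proof is correct, but it takes a genuinely different route from the paper's for parts (b), (c) and (d). The paper proves an auxiliary fact ($ab=0$ iff $\inprod{a}{b}=0$ for positive $a,b$), uses isometry of $\delta$ to conclude that $\delta$ preserves disjointness of positive elements, and then outsources the rest: (c) and the absolute-value preservation behind (b) are obtained by citing a theorem of Gardner on disjointness-preserving positive maps, and (d) is obtained by citing the generalized no-broadcasting theorem. You instead extract structure directly from the Kraus decomposition: unfolding $\delta^\dag \after \delta = \id$ as a Kraus representation of the identity channel and invoking the unitary freedom in Kraus decompositions forces $V_l^* V_k = c_{kl}\id$, and diagonalizing the Gram matrix $c$ yields $\delta = \sum_m \lambda_m W_m(\cdot)W_m^*$ with mutually orthogonal isometries $W_m$, from which (c) is a two-line computation; (b) then follows from (c) via uniqueness of positive square roots rather than from Gardner's absolute-value result, and your (d) replaces no-broadcasting with the elementary observation that Hilbert--Schmidt-orthogonal positive matrices annihilate each other (the same trace trick the paper uses only as a lemma). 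Your argument is self-contained and arguably more informative: the normal form $\sum_m \lambda_m W_m(\cdot)W_m^*$ is a reusable structural fact about isometric completely positive maps that connects directly to condition (n) of Theorem~\ref{thm:equivalentconditions} on minimal Kraus decompositions, whereas the paper's route is shorter on the page only because it leans on two external operator-algebraic results.
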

\begin{proof}
  To see (a), notice that if $\delta$ is positive, then it preserves (names of)
  adjoints~\cite[Lemma~2.3.1]{bhatia:positivematrices}, so that in
  particular $\delta(\alpha_j^\dag) = \delta(\alpha_j)^\dag =
  \alpha_j^\dag \tensor \alpha_j^\dag$. Therefore $\alpha_j^\dag$ is
  (also) one of the basis vectors. 

  For (b) and (c) we first prove the auxiliary result that $ab=0$ if and
  only if $\inprod{a}{b}=0$ for positive definite matrices $a,b$. Say
  $a=x^\dag x$ and $b=y^\dag y$. If $ab=0$, then
  $0=\Tr(x^\dag xy^\dag y)=\inprod{a}{b}$. Conversely, if
  $0=\inprod{a}{b}=\Tr(x^\dag xy^\dag y)=\Tr(yx^\dag xy^\dag)=\|xy^\dag\|^2$ then
  $xy^\dag$=0 and hence $ab=0$. 

  By isometry of $\delta$ we can therefore conclude that
  $\delta(a)\delta(b)=0$ whenever $ab=0$ for positive $a,b$. Hence (c)
  follows from~\cite[Theorem~2]{gardner:absolutevalue}, as well as the
  fact that $\delta$ preserves absolute values. Hence if $\alpha$ is copyable, then
  $\delta(|\alpha|)=|\delta(\alpha)|=|\alpha \tensor \alpha|=|\alpha|
  \tensor |\alpha|$, and so $|\alpha|$ is copyable, too, establishing (b).

  Finally, (d) follows from a generalized version of the no-cloning
  theorem~\cite[Corollary~3]{barnumetal:nobroadcasting}.
\end{proof}

\begin{lemma}\label{lem:deltadagid}
  If $\delta$ is completely positive, then $\delta^\dag(\id)$: 
  \begin{enumerate}
  \item[(a)] is positive semidefinite;
  \item[(b)] is invertible, and hence positive definite;
  \item[(c)] satisfies $\delta^\dag(\id) \geq \id$, and hence its
    eigenvalues are at least 1.
  \end{enumerate}
\end{lemma}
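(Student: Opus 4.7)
The plan is to handle the three parts in the order (a), (c), (b), with (c) as the main substantive step.

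For (a), I would observe that the Hilbert--Schmidt adjoint of a completely positive map is itself completely positive: writing $\delta(\rho)=\sum_i A_i\rho A_i^\dag$ in Kraus form gives $\delta^\dag(\sigma)=\sum_i A_i^\dag\sigma A_i$, which is manifestly CP. Since $\id\geq 0$ and CP maps preserve the positive semidefinite cone, $\delta^\dag(\id)$ is positive semidefinite, establishing (a).

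The core of the lemma is (c). The plan is to combine the two structural features of $\delta$ that we have in hand: complete positivity and the isometry identity $\delta^\dag\after\delta=\id$. Fix $v\in\field{C}^n$ and set $\rho=\ket{v}\bra{v}$. Complete positivity gives $\delta(\rho)\geq 0$, while the isometry forces $\Tr(\delta(\rho)^2)=\Tr(\rho^2)=\|v\|^4$. The elementary bound $\sum_i\lambda_i^2\leq(\sum_i\lambda_i)^2$ for nonnegative $\lambda_i$ yields $\Tr(M^2)\leq\Tr(M)^2$ whenever $M$ is positive semidefinite; applied to $M=\delta(\rho)$ this gives $\|v\|^4\leq\Tr(\delta(\rho))^2$, hence $\Tr(\delta(\rho))\geq\|v\|^2$. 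Finally, adjunction rewrites the trace as $\inprod{\id}{\delta(\rho)}=\inprod{\delta^\dag(\id)}{\rho}=v^\dag\delta^\dag(\id)v$, so we conclude $v^\dag\delta^\dag(\id)v\geq v^\dag\id v$ for every $v$, i.e.\ $\delta^\dag(\id)\geq\id$.

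Part (b) is then an immediate corollary of (c): $\delta^\dag(\id)\geq\id$ places the spectrum inside $[1,\infty)$, so the operator is positive definite and in particular invertible. The only real insight is the pairing in (c) of the Hilbert--Schmidt isometry of $\delta$ with the elementary trace inequality $\Tr(M)^2\geq\Tr(M^2)$ for PSD $M$; after that, everything reduces to routine bookkeeping.
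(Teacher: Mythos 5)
Your proof is correct, but it takes a genuinely different and substantially more elementary route than the paper's for parts (b) and (c). The paper proves (b) first, via the Choi--Effros factorization $\delta^\dag = (\sqrt{f} \tensor \sqrt{f}) \after d$ with $f = \delta^\dag(\id)$ and a rank count forcing $\sqrt{f}$ to be invertible; it then proves (c) by a block-matrix argument, showing that $\opnorm{\delta(\id)}^{-1}\bigl(\begin{smallmatrix}\id & \delta(\id)\\ \delta(\id) & \id\end{smallmatrix}\bigr)$ is positive semidefinite, pushing this through $\delta^\dag$ using $\delta^\dag\after\delta(\id)=\id$, and invoking a theorem on $2\times 2$ operator blocks to get $\delta^\dag(\id) \geq (\delta^\dag(\id))^{-1}$, whence the eigenvalues are at least $1$. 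You instead prove (c) directly: for $\rho=\ket{v}\bra{v}$, positivity gives $\delta(\rho)\geq 0$, the Hilbert--Schmidt isometry gives $\Tr(\delta(\rho)^2)=\|v\|^4$, the inequality $\Tr(M^2)\leq\Tr(M)^2$ for positive semidefinite $M$ gives $\Tr(\delta(\rho))\geq\|v\|^2$, and adjunction converts this into $\bra{v}\delta^\dag(\id)\ket{v}\geq\bra{v}\id\ket{v}$; then (b) falls out of (c) for free since the spectrum lies in $[1,\infty)$. Your argument is self-contained (it needs only positivity of $\delta$, the isometry axiom, and an elementary eigenvalue inequality), it avoids both external citations, and it removes the slightly awkward dependence of the paper's (c) on its (b). The only thing the paper's route buys is that its proof of (b) does not pass through the full lower bound $\delta^\dag(\id)\geq\id$, but since both proofs of (b) already rely on $\delta$ being an isometry, nothing of substance is lost. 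One small point worth making explicit in your writeup: $\Tr(\delta(\rho))\geq 0$ because $\delta(\rho)\geq 0$, which is what lets you take the square root of $\Tr(\delta(\rho))^2\geq\|v\|^4$ with the correct sign.
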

\begin{proof}
  To see (a), notice that $\id$ is a positive definite matrix and that
  $\delta^\dag$ is a positive map. For (b),
  apply~\cite[Lemma~2.2]{choieffros:injectivity} to get a completely
  positive map $d$ such that $d^\dag(\id)=\id$ and $\delta^\dag =
  (\sqrt{f} \tensor \sqrt{f}) \after d$, where $f=\delta^\dag(\id)$, and observe that
  \[
  \dim(X)^2 = \rank(\delta^\dag) = \rank((\sqrt{f} \tensor \sqrt{f})
  \after d^\dag) \leq \rank(\sqrt{f} \tensor \sqrt{f}) \leq \dim(X)^2,
  \]
  so that $\rank(\sqrt{f}) = \dim(X)$, whence $\sqrt{f}$ is
  invertible. Therefore also $\delta^\dag(\id)$ is invertible.

  Notice that $\delta(\id)^\dag=\delta(\id^\dag)=\delta(\id)$ since
  $\delta$ is positive by Proposition~\ref{prop:sufficientconditions}(a).
 The map $\frac{1}{\opnorm{\delta(\id)}}\delta(\id)$ is a contraction, and
  so~\cite[Proposition~1.3.1]{bhatia:positivematrices} the matrix
  \[
    \frac{1}{\opnorm{\delta(\id)}}
    \begin{pmatrix} \id & \delta(\id) \\ \delta(\id) & \id \end{pmatrix}
  \]
  is positive semidefinite. Because $\delta^\dag$ is completely positive and
  $\delta^\dag \after \delta(\id)=\id$, therefore also the matrix
  \[
    \frac{1}{\opnorm{\delta(\id)}}
    \begin{pmatrix} \delta^\dag(\id) & \id \\ 
                    \id & \delta^\dag(\id) \end{pmatrix} 
  \]
  is positive semidefinite. It follows from (b)
  and~\cite[Theorem~1.3.3]{bhatia:positivematrices} that
  $$\frac{1}{\opnorm{\delta(\id)}} \delta^\dag(\id) \geq
  \frac{1}{\opnorm{\delta(\id)}} \left(\frac{1}{\opnorm{\delta(\id)}}
  \delta^\dag(\id)\right)^{-1} \frac{1}{\opnorm{\delta(\id)}},$$ and therefore
  $\delta^\dag(\id) \geq (\delta^\dag(\id))^{-1}$. Finally, since
  $\delta^\dag(\id)$ is diagonalizable by (a), this yields
  $\delta^\dag(\id) \geq \id$, establishing (c).
\end{proof}

\begin{theorem}
\label{thm:equivalentconditions}
  The following are equivalent:
  \begin{enumerate}
    \item[(a)] A classical structure $\cp{\delta}$ on $\cp{X}$ in
      $\CPM(\Cat{fdHilb})$ is canonical;
    \item[(b)] $\delta$ preserves pure states;
    \item[(c)] $\delta$ is a trace-preserving map;
    \item[(d)] $\delta^\dag$ is unital: $\delta^\dag(\id[X \tensor X])
      = \id[X]$;
    \item[(e)] $\delta^\dag(\id[X \tensor X]) \leq \id[X]$;
    \item[(f)] $\delta$ is a trace-nonincreasing map;
    \item[(g)] $\delta^\dag(\id[X \tensor X])$ has operator norm at
     most 1;\footnote{One might think that
      isometry of $\delta$ means that it has operator norm 1, and
      hence so does $\delta^\dag$, and hence (by applying the Russo-Dye
      theorem~\cite[Corollary~2.3.8]{bhatia:positivematrices}) so does
      $\delta^\dag(\id)$. However, the first operator
      norm is taken with respect to the Hilbert-Schmidt norm on
      $M_n(\field{C})$, and the last operator norm is taken with
      respect to the operator norm on $M_n(\field{C})$. Hence all
      one can conclude from this reasoning is that $\dim(X)^{-1/2}
      \leq \opnorm{\delta^\dag(\id)} \leq \dim(X)^{1/2}$.}
    \item[(h)] $\id[X]=\sum A$ for some subset $A$ of the copyables;
    \item[(i)] $\Tr(\alpha)=1$ for positive semidefinite copyables $\alpha$;
    \item[(j)] $\rank(\alpha)=1$ for nonzero copyables $\alpha$;
    \item[(k)] if $\alpha$ is copyable, then so is $\alpha^\dag \alpha$;
    \item[(l)] the set of copyables is closed under composition:
      $\alpha_i \alpha_j \in \{0,\alpha_1,\ldots,\alpha_N\}$;
    \item[(m)] $\delta(\id[X])$ is idempotent;
    \item[(n)] the ancilla in a minimal Kraus decomposition of
      $\delta$ is one-dimensional.
\end{enumerate}
\end{theorem}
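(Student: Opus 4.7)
The plan is to group the fourteen conditions into three clusters---analytic (c)--(g), structural (h)--(m), and canonicity (a), (b), (n)---establish equivalences inside each cluster cheaply, and then bridge them. The direction (a) $\Rightarrow$ everything is a routine verification for canonical structures, so the substance is in recovering canonicity from any weak condition.

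The analytic cluster falls quickly. Hilbert--Schmidt duality $\langle\delta(\rho),\id\rangle = \langle\rho,\delta^\dag(\id)\rangle$ gives (c) $\Leftrightarrow$ (d) and (e) $\Leftrightarrow$ (f); the chain (d) $\Rightarrow$ (e) $\Leftrightarrow$ (g) is immediate since $\delta^\dag(\id)$ is positive semidefinite by Lemma~\ref{lem:deltadagid}(a). To close the loop, Lemma~\ref{lem:deltadagid}(c) gives $\delta^\dag(\id)\geq\id$, so any of (e), (f), (g) forces equality, hence (d).

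To cross into the structural cluster, I would use the Fourier expansion of $\id$ in the orthonormal copyable basis $\{\alpha\}$ of $M_n(\field{C})$. A direct calculation from Theorem~\ref{thm:choi} gives $\delta^\dag(x\tensor y) = \sum_\alpha \Tr(x^\dag\alpha)\Tr(y^\dag\alpha)\,\alpha$, so $\delta^\dag(\id) = \sum_\alpha \Tr(\alpha)^2\alpha$, while $\id = \sum_\alpha \overline{\Tr(\alpha)}\,\alpha$. Matching coefficients under (d) yields the equation $\Tr(\alpha)^2 = \overline{\Tr(\alpha)}$, whose solutions lie in $\{0,1,\omega,\overline{\omega}\}$ for $\omega = e^{2\pi i/3}$. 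By Proposition~\ref{prop:sufficientconditions}(b), $|\alpha|$ is a positive copyable, so $\Tr(|\alpha|)\in\{0,1\}$; combined with $\||\alpha|\|_{\mathrm{HS}}^2=1$, its eigenvalues satisfy $\sum\lambda_i=\sum\lambda_i^2=1$ with $\lambda_i\geq 0$, forcing $|\alpha|$ to be a rank-one projection and $\alpha$ to have rank one, giving (j). Polar decomposition then excludes the primitive cube roots: if $|\Tr(\alpha)|=1$ then $\alpha$ would be a scalar multiple of $|\alpha|$, contradicting orthonormality. So $\Tr(\alpha)\in\{0,1\}$, giving (i), and $A=\{\alpha : \Tr(\alpha)=1\}$ realises (h). The remaining items (k), (l), (m) follow from polar decomposition and simultaneous diagonalization of the commuting positive copyables (Proposition~\ref{prop:sufficientconditions}(d)).

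Canonicity closes the circle: the positive copyables are pairwise commuting rank-one projections $\ket{v_i}\bra{v_i}$ of unit trace summing to $\id$ (by the (d)-expansion), so $\{\ket{v_i}\}$ is an orthonormal basis of $X$; and every other copyable has the form $c\,\ket{v_k}\bra{v_j}$ with $|c|=1$, whose cocycle identity $c_{kj}c_{jm}=c_{km}$ from (l) is a coboundary, absorbed by a gauge change $w_i = \lambda_i v_i$ of the basis to yield (a). Finally (a) $\Leftrightarrow$ (b) $\Leftrightarrow$ (n) follows from the standard fact that a completely positive map preserves pure states iff its minimal Stinespring dilation has a one-dimensional ancilla iff it has a single Kraus operator, combined with the observation that the classical structure axioms force a single-Kraus isometric comultiplication to be the canonical one. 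The main obstacle is the Fourier coefficient-matching step and its rigidification to rank one: a priori copyables are arbitrary Hilbert--Schmidt-orthonormal matrices, and pinning down their discrete structure requires simultaneously exploiting the two distinct expansions of $\id$, the positivity content of Proposition~\ref{prop:sufficientconditions}, and the exclusion of the cube roots of unity; the phase-cocycle step that concludes canonicity is routine by comparison.
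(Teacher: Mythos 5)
Your analytic cluster (c)--(g), your treatment of (b)$\Rightarrow$(a), and (a)$\Leftrightarrow$(n) all match the paper's proof in substance, and the (d)$\Rightarrow$structural direction via Fourier expansion is sound apart from a conjugation slip: the correct coefficient equation is $\overline{\Tr(\alpha)}^{\,2}=\overline{\Tr(\alpha)}$ (since $\delta^\dag(x\tensor y)=\sum_\alpha \Tr(\alpha^\dag x)\Tr(\alpha^\dag y)\,\alpha$ and $\id=\sum_\alpha\overline{\Tr(\alpha)}\,\alpha$), which gives $\Tr(\alpha)\in\{0,1\}$ outright and makes your cube-root exclusion unnecessary.

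The genuine gap is that your implication graph has arrows only \emph{into} the structural cluster (h)--(m), never out of it: you derive (h)--(m) from (d) (and from (a)), but you never show that any of (h), (i), (j), (k), (l), (m), assumed on its own, implies any of the remaining conditions. The theorem asserts full equivalence, so for instance hypothesis (m) alone --- ``$\delta(\id)$ is idempotent'' --- must force canonicity, and your argument gives no route from (m) to anything; your ``canonicity closes the circle'' paragraph explicitly assumes the (d)-expansion, so it only closes the loop for $\{$a, b, c, d, e, f, g, n$\}$. The paper handles this with the descending chain (m)$\Rightarrow$(l)$\Rightarrow$(k)$\Rightarrow$(j)$\Rightarrow$(i)$\Rightarrow$(h)$\Rightarrow$(g), each step leaning on Proposition~\ref{prop:sufficientconditions} (for example, (m)$\Rightarrow$(l) uses that the projection $\delta(\id)$ commutes with the image of $\delta$ so that $\delta$ becomes multiplicative, and (h)$\Rightarrow$(g) is the re-entry point, via $\delta^\dag(\id)=\delta^\dag\bigl(\sum_{\alpha\in A}\alpha\tensor\sum_{\beta\in A}\beta\bigr)=\sum_{\alpha\in A}\alpha=\id$). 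You would need to supply this chain, or some equivalent set of implications exiting the structural cluster, before your proof establishes the stated theorem.
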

\begin{proof}
\begin{description}
  \item[(b$\Rightarrow$a)] If $\cp{\delta}$ sends pure states $\ket{\psi}
    \bra{\psi}$ to pure states $\ket{\varphi}\bra{\varphi}$, it defines a
    linear map $g$ sending $\ket{\psi}$ to the unique eigenvector
    $\ket{\varphi}$ of $\cp{\delta}(\ket{\psi}\bra{\psi})$, so that
    $\cp{\delta}=g_* \tensor g$. One then readily verifies that the
    axioms for classical structures on $\cp{\delta}$ imply that $g$ is a
    classical structure in $\Cat{fdHilb}$. 
  \item[(c$\Rightarrow$b)] Since completely positive trace-preserving
    maps send quantum states to quantum states (with eigenvalues
    bounded above by 1), isometry of $\delta$ yields 
   \[
       \Tr \left(\cp{\delta}(\ket{\psi}\bra{\psi})^2\right)
     = \Tr((\ket{\psi}\bra{\psi})^2) 
     = 1,
   \]
   and therefore $\cp{\delta}$ preserves purity.
 \item[(d$\Leftrightarrow$c)] This equivalence is well-known, see
    \eg \cite{keylwerner:lectures}; it is also easily seen using the
    graphical calculus.
  \item[(e$\Leftrightarrow$d)] One direction is trivial; the other
    follows from Lemma~\ref{lem:deltadagid}(c). 
  \item[(g$\Rightarrow$d)] Combining Lemma~\ref{lem:deltadagid}(c)
    with (g) gives that all eigenvalues of $\delta^\dag(\id)$ are 1,
    and hence that $\delta^\dag(\id)$ is (unitarily similar to) the
    identity matrix.
  \item[(f$\Leftrightarrow$e)] Clearly (c), and hence (e), implies
    (f), so it suffices to show (f)$\Rightarrow$(e). If $\delta$ is
    trace-nonincreasing, then 
    $\Tr(a-a\delta^\dag(\id))
     = \inprod{\id-\delta^\dag(\id)}{a}
     = \inprod{id}{a} - \inprod{\id}{\delta(a)}
     = \Tr(a) - \Tr(\delta(a)) \geq 0
    $ for all $a$. If $x_i>0$ is the $i$th eigenvalue of
    $\delta^\dag(\id)$, then by taking
    $a=\mathrm{diag}(0,\ldots,1,\ldots,0)$ with a single 1 in the
    $i$th place, we obtain $x_i-1 \leq 0$. Hence $\delta^\dag(\id)
    \leq \id$.
  \item[(h$\Rightarrow$g)] $\delta^\dag(\id) = \delta^\dag((\sum_{\alpha \in A} \alpha) \tensor
    (\sum_{\beta \in A} \beta)) = \sum_{\alpha,\beta \in A}
    \delta^\dag(\alpha \tensor \beta) = \sum_{\alpha \in A} \alpha$,
    so $\opnorm{\delta^\dag(\id)}=\opnorm{\id}=1$.
  \item[(i$\Rightarrow$h)] Use Plancherel's theorem to get $\id =
    \sum_\alpha \inprod{\id}{\alpha} \alpha = \sum_\alpha \Tr(\alpha)
    \alpha$. Now if $\alpha$ is copyable but not positive
    semidefinite, then by Proposition~\ref{prop:sufficientconditions}(b)
    we have $\Tr(\alpha)=\Tr(|\alpha|)=0$ iff $|\alpha|=0$ iff $\alpha=0$. Hence we
    can take $A=\{\alpha \mid \Tr(\alpha)=1\}$.
  \item[(j$\Rightarrow$i)] If $\alpha$ is positive semidefinite, then by
    assumption (j) we have $\rank(\alpha)=1$, so that $\alpha=U
    \ket{\phi}\bra{\phi} U^\dag$ for some unit vector $\phi$ and unitary
    $U$. Therefore $\Tr(\alpha)=\Tr(\ket{\phi}\bra{\phi})=\inprod{\phi}{\phi}=1$.
 \item[(k$\Rightarrow$j)] if $\alpha\neq 0$ then also
   $\alpha^\dag\alpha\neq 0$. Hence
   $\rank(\alpha)=\rank(\alpha^\dag\alpha)=1$.
 \item[(l$\Rightarrow$k)] follows from
   Proposition~\ref{prop:sufficientconditions}(a).
 \item[(m$\Rightarrow$l)] If $\delta(\id)$ is idempotent, then it is a
   projection by Proposition~\ref{prop:sufficientconditions}(a). By
   Proposition~\ref{prop:sufficientconditions}(c) it commutes with the
   image of $\delta$, and hence is contained in the support projection
   of that image. Hence $\delta(\id)\delta(a)=\delta(a)$ for all
   matrices $a$, so that $\delta$ preserves multiplication by
   Proposition~\ref{prop:sufficientconditions}(c), from which (l)
   follows. 
 \item[(a$\Rightarrow$m)] If $\delta$ is canonical, then $\alpha$ are of
   the form $\ket{i}\bra{j}$, so that $\id=\sum_i \ket{i}\bra{i}$,
   and we obtain $\delta(\id)^2=(\sum_i \delta(\ket{i}\bra{i}))(\sum_j
   \delta(\ket{j}\bra{j})) = \sum_i \ket{ii}\bra{ii} = \delta(\id)$.
 \item[(a$\Leftrightarrow$n)] is just a reformulation in terms of the
   Kraus decomposition; the ancilla space is the one called $Z$ in the
   definition of $\CPM(\cat{C})$ in Section~\ref{sec:preliminaries}.
\end{description}
\end{proof}

\section{Sequentializable quantum protocols}\label{sec:seq}

Up to normalization, classical structures in $\Cat{fdHilb}$ produce a
state that plays an important role in many protocols studied in
quantum information and computation, namely the \emph{maximally
entangled state}. This state is produced by postcomposing the unit of the
classical structure with as many comultiplications as are needed to
get an $m$-party state. Typical examples, frequently encountered in
quantum metrology, are optimal \emph{phase estimation} protocols. They can be
modeled graphically as the following diagram, read left to right:
\begin{align}\label{eq:parallel}
  \vcenter{\hbox{\begin{tikzpicture}[scale=0.4]
	\begin{pgfonlayer}{nodelayer}
		\node [style=none] (0) at (-1, 2) {};
		\node [style=box] (1) at (0, 2) {$\;\;f_m\;\;$};
		\node [style=none] (2) at (1, 2) {};
		\node [style=none] (3) at (-3, 1) {};
		\node [style=dot] (4) at (-2, 1) {};
		\node [style=dot] (5) at (2, 1) {};
		\node [style=none] (6) at (3, 1) {};
		\node [style=none] (7) at (-3.5, 0.5) {$\udots$};
		\node [style=none] (8) at (3.5, 0.5) {$\ddots$};
		\node [style=none] (8a) at (0, -1) {$\vdots$};
		\node [style=none] (9) at (-4, 0) {};
		\node [style=none] (10) at (-1, 0) {};
		\node [style=box] (11) at (0, 0) {$f_{m-1}$};
		\node [style=none] (12) at (1, 0) {};
		\node [style=none] (13) at (4, 0) {};
		\node [style=dot] (14) at (-5, -1) {};
		\node [style=dot] (15) at (5, -1) {};
		\node [style=none] (16) at (-4, -2) {};
		\node [style=none] (17) at (-1, -2) {};
		\node [style=box] (18) at (0, -2) {$\;\;f_2\;\;$};
		\node [style=none] (19) at (1, -2) {};
		\node [style=none] (20) at (4, -2) {};
		\node [style=dot] (21) at (-7, -2.25) {};
		\node [style=dot] (22) at (-6, -2.25) {};
		\node [style=dot] (23) at (6, -2.25) {};
		\node [style=dot] (24) at (7, -2.25) {};
		\node [style=none] (25) at (-5, -3.5) {};
		\node [style=none] (26) at (-1, -3.5) {};
		\node [style=box] (27) at (0, -3.5) {$\;\;f_1\;\;$};
		\node [style=none] (28) at (1, -3.5) {};
		\node [style=none] (29) at (5, -3.5) {};
	\end{pgfonlayer}
	\begin{pgfonlayer}{edgelayer}
		\draw (25.center) to (26.center);
		\draw[bend left=45] (22) to (14);
		\draw (16.center) to (17.center);
		\draw[bend left=45, looseness=1.25] (13.center) to (15);
		\draw[bend right=45] (29.center) to (23);
		\draw (21) to (22);
		\draw (19.center) to (20.center);
		\draw[bend left=45, looseness=1.25] (2.center) to (5);
		\draw[bend left=45, looseness=1.25] (15) to (23);
		\draw (26.center) to (28.center);
		\draw[bend right=45, looseness=1.25] (20.center) to (15);
		\draw[bend left=315] (22) to (25.center);
		\draw[bend left=45, looseness=1.25] (14) to (9.center);
		\draw[bend right=45, looseness=1.25] (4) to (10.center);
		\draw (28.center) to (29.center);
		\draw (3.center) to (4);
		\draw (10.center) to (12.center);
		\draw (23) to (24);
		\draw (17.center) to (19.center);
		\draw (5) to (6.center);
		\draw[bend left=45, looseness=1.25] (4) to (0.center);
		\draw (0.center) to (2.center);
		\draw[bend right=45, looseness=1.25] (14) to (16.center);
		\draw[bend right=45, looseness=1.25] (12.center) to (5);
	\end{pgfonlayer}
      \end{tikzpicture}}} 
\end{align}

In one of most common cases, the diagram is interpreted in the
category $\Cat{fdHilb}$ and the morphisms $f_j$ are all
identical phase gates $e^{-i \phi \sigma_z /2 }$, with unknown phase
$\phi$ and Pauli matrix $\sigma_z$. This entangled protocol produces an estimator $\hat{\phi}$ of
the unknown phase with an uncertainty scaling $\delta \hat{\phi}
\propto 1/m$, which is called the Heisenberg
limit~\cite{giovannetti_quantum_2006,boixo_generalized_2007}. It
follows from the \emph{generalized spider
theorem}~\cite{coeckeduncan:observables} that if the linear maps 
$f_j$ are ``generalized phases'' that are compatible with the classical
structure, then the previous diagram is equivalent to the following one
\begin{align}\label{eq:sequential}
\vcenter{\hbox{\begin{tikzpicture}[scale=0.75]
	\begin{pgfonlayer}{nodelayer}
		\node [style=dot] (0) at (-5, 0) {};
		\node [style=box] (1) at (-3.5, 0) {$\;f_{\pi(1)}\;$};
		\node [style=box] (2) at (-2, 0) {$\;f_{\pi(2)}\;$};
		\node [style=none] (3) at (-1, 0) {};
		\node [style=none] (4) at (-0.05, -0.05) {$\cdots$};
		\node [style=none] (5) at (.9, 0) {};
		\node [style=box] (6) at (2, 0) {$f_{\pi(m-1)}$};
		\node [style=box] (7) at (3.6, 0) {$\,f_{\pi(m)}\,$};
		\node [style=dot] (8) at (5, 0) {};
	\end{pgfonlayer}
	\begin{pgfonlayer}{edgelayer}
		\draw (2) to (3.center);
		\draw (5.center) to (6);
		\draw (7) to (8);
		\draw (0) to (1);
		\draw (6) to (7);
		\draw (1) to (2);
	\end{pgfonlayer}
      \end{tikzpicture}}}
\end{align}
for any permutation $\pi \in S(m)$. This implies
that the entangled parallel protocol~\eqref{eq:parallel} is equivalent to a sequential
protocol~\eqref{eq:sequential} with the same uncertainty scaling. This equivalence has been
studied in frame synchronization~\cite{rudolph_quantum_2003} and clock
synchronization~\cite{boixo_decoherence_2006,higgins_entanglement-free_2007}
between two parties. In quantum computation, the transformation
between sequential and entangled protocols has also been used to study
the computational power of quantum circuits with restricted
length~\cite{moorenilsson:parallel,hoyer_quantum_2003}.

Categorically, the wires and boxes in the diagrams above can also be interpreted
as finite-dimensional Hilbert spaces and quantum maps,
\textit{i.e.}~trace-preserving completely positive maps. These are the
maps quantum information theory concerns~\cite{keylwerner:lectures}.
For quantum computation, it makes sense to relax to
trace-nonincreasing completely positive
maps~\cite{selinger:superoperators}. Neither form a dagger  
compact category in themselves, but we may regard them as living within
$\CPM(\Cat{fdHilb})$. The equivalence between~\eqref{eq:parallel}
and~\eqref{eq:sequential} holds unabated for quantum  
maps $\cp{f}$ in $\CPM(\Cat{fdHilb})$, as long as they are compatible
with the classical structure. 

This interpretation has the advantage that general quantum maps are
able to model \emph{noisy} estimation protocols. Hence we can now,
fully generally, address the question of whether the parallel protocol
with maximally entangled states is more fragile in its response to
noise than the corresponding sequential protocol. On the one hand,
physical intuition tells us that this might be the case, given that
entanglement is considered to be a very delicate resource in
general. On the other hand, the equivalence between both diagrams
derived from the categorical machinery means that for certain kinds of
noise the entangled protocol is not more fragile than the sequential
one. This generalizes the equivalences of clock synchronization
protocols under noise that have been studied for two-dimensional
Hilbert space in~\cite{boixo_decoherence_2006}. By
Theorem~\ref{thm:equivalentconditions}, classical structures are
quantum maps if and only if they are canonical, and hence correspond
to an orthonormal basis. The only question left is which quantum maps
are compatible with that basis. We now give without proof the explicit
form of such maps, referring to~\cite{boixoheunen:prl} for details.
 
Being a completely positive operator, a quantum map can be written as
$\rho \mapsto \sum_s b_s \rho b_s^\dagger$, where 
$b_s$ are the Kraus operators. 
Denote the $n$th roots of unity by
$
  \omega_j = e^{-i 2 \pi j/n}
$,
where $n$ is the dimension of the underlying Hilbert space. The
quantum maps that are compatible with a classical structure are
precisely those which can be expressed with Kraus operators of the form
\[
  b_s = \sqrt{r_s} \sum_j e^{-i (\phi_j + 2 \pi j s /n)} \ket j \bra j,
\]
where $\{\ket j\}$ is the basis defined by the classical structure,
$\phi_j$ define arbitrary phase rotations, and $r_j$ are positive
constants parametrizing general dephasing noise, satisfying $\sum_j r_j=1$.  
Maps without dephasing (\textit{i.e.}~pure rotations) are obtained by choosing
$r_s = \delta_{s,0}$. 

Thus, when the above quantum metrology protocols are modeled
with such maps, usage of maximally entangled states is equally
robust as, and has equal uncertainty scaling to, the corresponding sequential version.

\bibliographystyle{eptcs}
\bibliography{sequentializable-qpl}

\end{document}